\journal{Science of Computer Programming}
\begin{document}

\newcommand{\Gp}{G^\prime}
\newcommand{\Tup}[2]{(#1,\,#2)}
\newcommand{\Tri}[3]{(#1,\,#2,\,#3)}
\newcommand{\Grm}[4]{(#1,\,#2,\,#3,\,#4)}
\newcommand{\Peg}[2]{#1[#2]}
\newcommand{\Pgg}[1]{\Peg{G}{#1}}
\newcommand{\Mat}[2]{#1\;\,#2\,}
\newcommand{\Matm}[3]{#1\;\,#2\;\,#3\,}
\newcommand{\Lp}{\stackrel{\mbox{\tiny{PEG}}}{\leadsto}}
\newcommand{\Li}{\stackrel{\mbox{\tiny{INC}}}{\leadsto}}
\newcommand{\Matg}[2]{\Mat{\Pgg{#1}}{#2}}
\newcommand{\Matgm}[3]{\Matm{\Pgg{#1}}{#2}{#3}}
\newcommand{\Matgmm}[2]{\Matm{\Pgg{#1}}{#2}{M}}
\newcommand{\Nothing}{{\tt fail}}
\newcommand{\Just}[1]{#1}
\newcommand{\fivespaces}{\;\;\;\;\;}
\newcommand{\tenspaces}{\fivespaces\fivespaces}
\newcommand{\twentyspaces}{\tenspaces\tenspaces}
\newcommand{\thirtyspaces}{\twentyspaces\tenspaces}
\newcommand{\fortyspaces}{\twentyspaces\twentyspaces}
\newcommand{\interf}{\fivespaces}
\newcommand{\mylabel}[1]{\ \mathbf{(#1)}}
\newcommand{\Con}[2]{#1\,#2}
\newcommand{\Choice}[2]{#1\,/\:#2}
\newcommand{\Pcfp}{P_{CFP}}
\newcommand{\Anyx}{X}
\newcommand{\Epsi}{\varepsilon}
\newcommand{\Ps}{p_S}
\newcommand{\Xp}{x^\prime}
\newcommand{\Yp}{y^\prime}

\def\verylongrightarrow{\hbox to 35pt{\rightarrowfill}}

\newcommand{\Pow}{\, \widehat{} \,\,}
\newcommand{\Matchf}[4]{\mathrm{match} \;\, #1\;#2\;#3\;#4}
\newcommand{\Matchl}[4]{\mathrm{match}_L \;\, #1\;#2\;#3\;#4}
\newcommand{\Justc}[2]{(\mbox{#1},\,#2)}
\newcommand{\Striple}[3]{(#1,\,#2,\,#3)}
\newcommand{\Sstate}[3]{\langle#1,\,#2,\,#3\rangle}
\newcommand{\Sstatel}[4]{\langle#1,\,#2,\,#3,\,#4\rangle}
\newcommand{\Sstatec}[5]{\langle#1,\,#2,\,#3,\,#4,\,#5\rangle}
\newcommand{\Sfail}[1]{\mbox{\bf Fail}{\langle#1\rangle}}
\newcommand{\Sstep}[4]{#1 & \stackrel{#2}{\verylongrightarrow} & #3 #4\\}
\newcommand{\Sstepf}[4]{#1 & \stackrel{#2}{\verylongrightarrow} & #3 #4}
\newcommand{\Ssteppp}[3]{#1 & \stackrel{\parbox{10pt}{\addvspace{5pt}}}{\verylongrightarrow} & #2 #3\\}
\newcommand{\Sstepppf}[3]{#1 & \stackrel{\parbox{10pt}{\addvspace{5pt}}}{\verylongrightarrow} & #2 #3}
\newcommand{\Sstepp}[4]{#1 \xrightarrow{#2} #3\\}
\newcommand{\Ssteppf}[4]{#1 \xrightarrow{#2} #3}
\newcommand{\Ia}[2]{\mbox{{\scriptsize {\tt #1} $#2$}}}
\newcommand{\Iaa}[3]{\mbox{{\scriptsize {\tt #1} $#2$ $#3$}}}
\newcommand{\In}[2]{\mbox{{{\tt #1} $#2$}}}
\newcommand{\Fail}{\mbox{\bf Fail}}
\newcommand{\Hd}[1]{\mbox{hd}(#1)}
\newcommand{\Sub}{{s}}
\newcommand{\Subb}{{s_1}}
\newcommand{\Subi}{\Sub [i]}
\newcommand{\Subbi}{\Subb [i]}
\newcommand{\MathN}[1]{\mbox{\emph{#1}}}
\newcommand{\Nat}{\mbox{\bf N}}
\newcommand{\myarrow}{\xrightarrow{\;\;\;\;\;}}
\newcommand{\myarrowstar}{\xrightarrow{\;\;*\;\;}}
\newcommand{\chmath}[1]{\mbox{`}#1\mbox{'}}

\newtheorem{definition}{Definition}[section]
\newtheorem{proposition}{Proposition}[section]
\newtheorem{corollary}[proposition]{Corollary}
\newtheorem{lemma}[proposition]{Lemma}

\begin{frontmatter}

\title{Left Recursion in Parsing Expression Grammars}

\author{S\'{e}rgio Medeiros}
\ead{sergio@ufs.br}
\address{Department of Computer Science -- UFS -- Aracaju -- Brazil} 

\author{Fabio Mascarenhas}
\ead{mascarenhas@ufrj.br}
\address{Department of Computer Science -- UFRJ -- Rio de Janeiro -- Brazil} 

\author{Roberto Ierusalimschy}
\ead{roberto@inf.puc-rio.br}
\address{Department of Computer Science -- PUC-Rio -- Rio de Janeiro -- Brazil} 

\begin{abstract}
Parsing Expression Grammars (PEGs) are a formalism
that can describe all deterministic context-free
languages through a set of rules that specify a top-down parser for
some language. PEGs are easy to use, and there are efficient
implementations of PEG libraries in several programming languages. 

A frequently missed feature of PEGs is left recursion, which is
commonly used in Context-Free Grammars (CFGs) to encode left-associative
operations. We present a simple conservative extension to the
semantics of PEGs that gives useful meaning to direct and indirect
left-recursive rules, and show that our extensions make it easy to
express left-recursive idioms from CFGs in PEGs, with similar
results. We prove the conservativeness of these extensions, and also
prove that they work with any left-recursive PEG.

PEGs can also be compiled to programs in a low-level {\em parsing
  machine}. We present an extension to the semantics of the operations
of this parsing machine that let it interpret left-recursive PEGs, and
prove that this extension is correct with regards to our semantics for
left-recursive PEGs.
\end{abstract}

\begin{keyword}
parsing expression grammars\sep
parsing\sep
left recursion\sep
natural semantics\sep
parsing machine\sep
packrat parsing
\end{keyword}

\end{frontmatter}
     
\section{Introduction}
\label{sec:intro}

Parsing Expression Grammars (PEGs)~\cite{ford:peg} are a formalism for
describing a language's syntax, based on the TDPL and GTDPL formalisms
for top-down parsing with backtracking~\cite{birman:tdpl,aho:parsing}, 
and an alternative to the commonly used Context Free Grammars (CFGs). 
Unlike CFGs, PEGs are unambiguous
by construction, and their standard semantics is based on recognizing
instead of deriving strings. Furthermore, a PEG can be considered both the
specification of a language and the specification of a top-down parser
for that language. 

PEGs use the notion of {\em limited backtracking}: the parser, when faced with several
alternatives, tries them in a deterministic order (left to right),
discarding remaining alternatives after one of them succeeds.
This {\em ordered choice} characteristic makes PEGs unambiguous by
design, at the cost of making the language of a PEG harder to reason
about than the language of a CFG. The mindset of the author of a language specification
that wants to use PEGs should be closer to the mindset of the programmer of a hand-written
parser than the mindset of a grammar writer.

In comparison to CFGs, PEGs add the restriction that the order of alternatives in
a production matters (unlike the alternatives in the EBNF notation of CFGs, which can be reordered at will), 
but introduce a richer syntax, based on the syntax of extended
regular expressions. Extended regular expressions are a version of regular expressions popularized by pattern macthing
tools such as {\tt grep} and languages such as Perl. PEGs also add {\em syntactic predicates}, a form
of unrestricted lookahead where the parser checks whether the rest of
the input matches a parsing expression without consuming the input.

As discussed by Ford~\cite{ford:thesis}, with the help of syntactic
predicates it is easy to describe a scannerless PEG parser for a programming
language with reserved words and identifiers, a task that is not easy to
accomplish with CFGs. In a parser based on the CFG below, where we omitted
the definition of $Letter$, the non-terminal $I\!d$ could also match {\tt for},
a reserved word, because it is a valid identifier:
\begin{align*}
Id & \rightarrow Letter \; IdAux \\
IdAux & \rightarrow Letter \; IdAux \ | \ \Epsi \\
ResWord & \rightarrow if \ | \ for \ | \ while
\end{align*}

As CFGs do not have syntactic predicates, and context-free languages are not closed under complement,
 a separate lexer is necessary to distinguish reserved
words from identifiers. In case of PEGs, we can just use the not predicate
$!$, as below: 
\begin{align*}
Id & \leftarrow \;!(ResWord) \; Letter \; IdAux \\
IdAux & \leftarrow Letter \; IdAux \ / \ \Epsi \\
ResWord & \leftarrow if \ / \ for \ / \ while
\end{align*}

Moreover, syntactic predicates can also be used to better decide which
alternative of a choice should match, and address the limitations of ordered choice.
For example, we can use syntactic predicates to rewrite context-free grammars belonging to
some classes of grammars that have top-down predictive parsers as PEGs without changing the
gross structure of the grammar and the resulting parse trees~\cite{mascarenhas:cfgpeg}. 

The top-down parsing approach of PEGs means that they cannot handle
left recursion in grammar rules, as they would make the parser loop
forever. Left recursion can be detected structurally, so PEGs with
left-recursive rules can be simply rejected by PEG implementations
instead of leading to parsers that do not terminate, but the lack of
support for left recursion is a restriction on the expressiveness of
PEGs. The use of left recursion is a common idiom for expressing
language constructs in a grammar, and is present in published grammars
for programming languages; the use of left recursion can make
rewriting an existing grammar as a PEG a difficult task~\cite{roman:primitive}.
While left recursion can be eliminated from the grammar, the necessary
transformations can extensively change the shape of the resulting parse trees,
making posterior processing of these trees harder than it could be.

There are proposals for adding support for left recursion to PEGs, but
they either assume a particular PEG implementation
approach, {\em packrat parsing}~\cite{warth:left}, or support just
direct left recursion~\cite{tratt:left}. Packrat parsing~\cite{ford:packrat} is an
optimization of PEGs that uses memoization to guarantee linear time
behavior in the presence of backtracking and syntactic predicates, but
can be slower in practice~\cite{roman:aspects,mizushima}. Packrat parsing is a common
implementation approach for PEGs, but there are
others~\cite{dls:lpeg}. Indirect left recursion is
present in real grammars, and is difficult to rewrite the
grammar to eliminate it, because this can involve the
rewriting of many rules~\cite{roman:primitive}.

In this paper, we present a novel operational semantics for PEGs that
gives a well-defined and useful meaning for PEGs with left-recursive
rules. The semantics is given as a conservative extension of the
existing semantics, so PEGs that do not have left-recursive rules
continue having the same meaning as they had. It is also implementation agnostic, and
should be easily implementable on packrat implementations, plain
recursive descent implementations, and implementations based on a
parsing machine.

We also introduce {\em parse strings} as a possible semantic value
resulting from a PEG parsing some input, in parallel to the parse
trees of context-free grammars. We show that the parse strings that
left-recursive PEGs yield for the common left-recursive grammar
idioms are similar to the parse trees we get from bottom-up parsers
and left-recursive CFGs, so the use of left-recursive rules in PEGs
with our semantics should be intuitive for grammar writers.

A simple addition to our semantics lets us describe expression
grammars with multiple levels of operator precedence and associativity
more easily, in the style that users of LR parser generators are used
to.

Finally, PEGs can also be compiled to programs in a low-level {\em parsing
  machine}~\cite{dls:lpeg}. We present an extension to the semantics of the operations
of this parsing machine that let it interpret left-recursive PEGs, and
prove that this extension is correct with regards to our semantics for
left-recursive PEGs.

The rest of this paper is organized as follows: Section~\ref{sec:peg}
presents a brief introduction to PEGs and discusses the problem
of left recursion in PEGs; Section~\ref{sec:left} presents our
semantic extensions for PEGs with left-recursive
rules; Section~\ref{sec:prec} presents a simple addition that makes it
easier to specify operator precedence and associativity in expression
grammars; Section~\ref{sec:machine} presents left recursion as an
extension to a parsing machine for PEGs; Section~\ref{sec:related} reviews
some related work on PEGs and left recursion in more detail; finally,
Section~\ref{sec:con} presents our concluding remarks.

\section{Parsing Expression Grammars and Left Recursion}
\label{sec:peg}

Parsing Expression Grammars borrow the use of non-terminals and rules
(or productions) to express context-free recursion, although all
non-terminals in a PEG must have only one rule. The syntax of the
right side of the rules, the {\em parsing expressions}, is borrowed
from regular expressions and its extensions, in order to make it
easier to build parsers that parse directly from characters instead of
tokens from a previous lexical analysis step. The semantics of PEGs
come from backtracking top-down parsers, but in PEGs the backtracking
is local to each choice point.

Our presentation of PEGs is slightly different from
Ford's~\cite{ford:peg}. The style we use comes from
our previous work on PEGs~\cite{dls:lpeg,sblp:repeg},
and makes the exposition of
our extensions, and their behavior, easier to understand.
We define a PEG $G$ as a tuple $(V, T, P, p_S)$ where $V$ is the finite
set of non-terminals, $T$ is the alphabet (finite set of terminals),
$P$ is a function from $V$ to parsing expressions, and $p_S$ is the
{\em starting expression}, the one that the PEG matches. 
Function $P$ is commonly described through a set of rules
of the form $A \leftarrow p$, where $A \in V$ and $p$ is a
parsing expression.

Parsing expressions are the core of our
formalism, and they are defined inductively as the empty expression
$\varepsilon$, a terminal symbol $a$, a non-terminal symbol $A$, a
concatenation $p_1p_2$ of two parsing expressions $p_1$ and $p_2$, an
ordered choice $p_1 / p_2$ between two parsing expressions $p_1$ and
$p_2$, a repetition $p^*$ of a parsing expression $p$, or a
not-predicate $!p$ of a parsing expression $p$. We leave out
extensions such as the dot, character classes, strings, and the
and-predicate, as their addition is straightforward. 

\begin{figure}[p]
{\small
\begin{align*}
& \textbf{Empty String} \fivespaces
{\frac{}{\Matg{\Epsi}{x} \Lp \Tup{x}{\Epsi}}}
\mylabel{empty.1}  \fivespaces
\textbf{Terminal} \fivespaces
{\frac{}{\Matg{a}{ax} \Lp \Tup{x}{a}}} \mylabel{char.1} \\ \\
&  \tenspaces \tenspaces \tenspaces
{\frac{}{\Matg{b}{ax} \Lp \Nothing}} \mbox{ , } b \neq a \mylabel{char.2} 
\tenspaces
{\frac{}{\Matg{a}{\Epsi} \Lp \Nothing}} \mylabel{char.3}  \\ \\
& \textbf{Variable} \fivespaces
{\frac{\Matg{P(A)}{xy} \Lp \Tup{y}{\Xp}}
	{\Matg{A}{xy} \Lp \Tup{y}{A[\Xp]}}}    \mylabel{var.1}  
\tenspaces
{\frac{\Matg{P(A)}{x} \Lp \Nothing}
	{\Matg{A}{x} \Lp \Nothing}}    \mylabel{var.2}
\\ \\
& \textbf{Concatenation}
\fivespaces
{\frac{\Matg{p_1}{xyz} \Lp \Tup{yz}{\Xp} \interf \Matg{p_2}{yz}
    \Lp \Tup{z}{\Yp}}
	{\Matg{\Con{p_1}{p_2}}{xyz} \Lp \Tup{z}{\Xp \Yp}}}
      \mylabel{con.1} \\ \\
&
\fivespaces
{\frac{\Matg{p_1}{xy} \Lp \Tup{y}{\Xp} \interf \Matg{p_2}{y}
    \Lp \Nothing}
	{\Matg{\Con{p_1}{p_2}}{xy} \Lp \Nothing}}
      \mylabel{con.2}
 \fivespaces
{\frac{\Matg{p_1}{x} \Lp \Nothing}
	{\Matg{\Con{p_1}{p_2}}{x} \Lp \Nothing}} \mylabel{con.3} \\ \\
& \textbf{Choice} 
\fivespaces
{\frac{\Matg{p_1}{xy} \Lp \Tup{y}{\Xp}}
	{\Matg{\Choice{p_1}{p_2}}{xy} \Lp \Tup{y}{\Xp}}} \mylabel{ord.1} \fivespaces
{\frac{\Matg{p_1}{x} \Lp \Nothing \interf \Matg{p_2}{x} \Lp \Nothing}
	{\Matg{\Choice{p_1}{p_2}}{x} \Lp \Nothing}} \mylabel{ord.2} \\ \\ 
& {\frac{\Matg{p_1}{xy} \Lp \Nothing \interf \Matg{p_2}{xy} \Lp \Tup{y}{\Xp}}
	{\Matg{\Choice{p_1}{p_2}}{xy} \Lp \Tup{y}{\Xp}}} \mylabel{ord.3} \\ \\ 
& \textbf{Not Predicate} \tenspaces \fivespaces
{\frac{\Matg{p}{x} \Lp \Nothing} 
	{\Matg{!p}{x} \Lp \Tup{x}{\Epsi}}} \mylabel{not.1} \tenspaces
{\frac{\Matg{p}{xy} \Lp \Tup{y}{\Xp}}
	{\Matg{!p}{xy} \Lp \Nothing}} \mylabel{not.2} \\ \\
& \textbf{Repetition} \;\;
{\frac{\Matg{p}{x} \Lp \Nothing}
	{\Matg{p^*}{x} \Lp \Tup{x}{\Epsi}}} \mylabel{rep.1} \;\;
{\frac{\Matg{p}{xyz} \Lp \Tup{yz}{\Xp} \interf \Matg{p^*}{yz} \Lp \Tup{z}{\Yp}}
	{\Matg{p^*}{xyz} \Lp \Tup{z}{\Xp \Yp}}} \mylabel{rep.2} 
\end{align*}
\caption{Semantics of the $\Lp$ relation}
\label{fig:matchpeg}
}
\end{figure}

We define the semantics of PEGs via a relation $\Lp$ among a PEG,
a parsing expression, a subject, and a result. The notation
$\Matg{p}{xy} \Lp \Just{(y,x^\prime)}$ means that the expression $p$
matches the input $xy$, consuming the prefix $x$, while leaving $y$ and
yielding a {\em parse string} $x^\prime$ as the output, while resolving any non-terminals using
the rules of $G$. We use $\Matg{p}{xy} \Lp \Nothing$ to express
an unsuccessful match. The language of a PEG $G$ is defined as all
strings that $G$'s starting expression consumes, that is, the set $\{ x \in T^* \ | \
\Matg{p_s}{xy} \Lp \Just{(y, x^\prime)} \}$.

Figure~\ref{fig:matchpeg} presents the
definition of $\Lp$ using natural semantics~\cite{kahn,winskel}, as a set of inference
rules. Intuitively, $\varepsilon$ just succeeds and leaves the subject
unaffected; $a$ matches and consumes itself, or fails; $A$
tries to match the expression $P(A)$; $p_1p_2$ tries to match $p_1$,
and if it succeeds tries to match $p_2$ on the part of the subject that
$p_1$ did not consume; $p_1 / p_2$ tries to match $p_1$, and if it
fails tries to match $p_2$; $p^*$ repeatedly tries to match $p$ until
it fails, thus consuming as much of the subject as it can; finally, $!p$
tries to match $p$ and fails if $p$ succeeds and succeeds if $p$
fails, in any case leaving the subject unaffected. It is easy to see
that the result of a match is either failure or a suffix of the
subject (not a proper suffix, as the expression may succeed without
consuming anything).

Context-Free Grammars have the notion of a {\em parse tree}, a
graphical representation of the structure that a valid subject has,
according to the grammar. The proof trees of our semantics can have a
similar role, but they have extra information that can obscure the
desired structure. This problem will be exacerbated in the proof trees that our
rules for left-recursion yield, and is the reason we introduce parse strings
 to our formalism. A parse string is roughly a linearization of a
 parse tree, and shows which non-terminals have been used in the
 process of matching a given subject; rule {\bf var.1} brackets the parse
string yielded by the right side of a non-terminal's rule,
and tags it with the left side. The brackets and the tag are part of the parse string,
and are assumed to not be in $T$, the PEG's input alphabet.
Having the result of a parse be an actual tree and having arbitrary semantic actions are
 straightforward extensions.

When using PEGs for parsing it is important to guarantee that a given
grammar will either yield a successful result or $\Nothing$ for every
subject, so parsing always terminates. Grammars where this is true are
{\em complete}~\cite{ford:peg}. In order to guarantee completeness, it
is sufficient to check for the absence of
direct or indirect {\em left recursion}, a property that can be
checked structurally using the {\em well-formed} 
predicate from Ford~\cite{ford:peg} (abbreviated {\em WF}). 

Inductively, empty expressions and symbol expressions are always
well-formed; a non-terminal is well-formed if it has a production and
it is well-formed; a choice is well-formed if the alternatives are
well-formed; a not predicate is well-formed if the expression it uses is
well-formed; a repetition is well-formed if the expression it repeats
is well-formed and cannot succeed without consuming input; finally, a concatenation is
well-formed if either its first expression is well-formed and cannot
succeed without consuming input or both of its expressions are
well-formed. 

A grammar is well-formed if its non-terminals and starting expression are all
well-formed. The test of whether an expression cannot succeed while
not consuming input is also computable from the structure of the
expression and its grammar from an inductive
definition~\cite{ford:peg}. The rule for well-formedness of
repetitions just derives from writing a repetition $p^*$ as a
recursion $A \leftarrow pA \ /\ \varepsilon$, so a non-well-formed
repetition is just a special case of a left-recursive rule.

Left recursion is not a problem in the popular bottom-up parsing
approaches, and is a natural way to express several common parsing
idioms. Expressing repetition using left recursion in a CFG yields a
left-associative parse tree, which is often desirable when parsing
programming languages, either because operations have to be
left-associative or because left-associativity is more
efficient in bottom-up parsers~\cite{grune}. For example, the following is a simple
left-associative CFG for additive expressions, written in EBNF notation:
\begin{align*}
E & \rightarrow E + T \ | \ E - T \ |\ T \\
T & \rightarrow n \ | \ \boldsymbol{(}E \boldsymbol{)}
\end{align*}

Rewriting the above grammar as a PEG, by replacing $|$ with the ordered
choice operator, yields a non-well-formed PEG that does not have a proof tree for any
subject. We can rewrite the grammar to eliminate the left recursion,
giving the following CFG, again in EBNF (the curly brackets are
metasymbols of EBNF notation, and express zero-or-more
repetition, while the parentheses are terminals):
\begin{align*}
E &\rightarrow T \{E^\prime\} \\
T &\rightarrow n \ | \ \boldsymbol{(}E\boldsymbol{)} \\
E^\prime& \rightarrow + T \ | \ - T
\end{align*}

This is a simple transformation, but it yields a different parse tree,
and obscures the intentions of the grammar writer, even though it is
possible to transform the parse tree of the non-left-recursive grammar
into the left-associative parse tree of the left-recursive
grammar. But at least we can straightforwardly express the non-left-recursive grammar with
the following PEG:
\begin{align*}
E & \leftarrow T\ E^{\prime *} \\
T & \leftarrow n \ / \ \boldsymbol{(}E \boldsymbol{)} \\
E^\prime & \leftarrow + T \ / \ - T
\end{align*}

Indirect left recursion is harder to eliminate, and its elimination
changes the structure of the grammar and the resulting trees even
more. For example, the following indirectly left-recursive CFG denotes a
very simplified grammar for l-values in a language with variables,
first-class functions, and records (where $x$ stands for identifiers and $n$
for expressions):
\begin{align*}
L & \rightarrow P.x \ | \ x \\
P & \rightarrow P\boldsymbol{(}n\boldsymbol{)} \ | \ L 
\end{align*}

This grammar generates $x$ and $x$ followed by any number of $(n)$ or
$.x$, as long as it ends with $.x$. An l-value is a prefix expression
followed by a field access, or a single variable, and a prefix
expression is a prefix expression followed by an operand, denoting a
function call, or a valid l-value. In the parse trees for this grammar each $(n)$ or $.x$ associates
to the left. 

Writing a PEG that parses the same language is difficult. We can eliminate the indirect left
recursion on $L$ by substitution inside $P$, getting $P \rightarrow
P\boldsymbol{(}n\boldsymbol{)} \ | \ P.x \ | \ x$, and then eliminate
the direct left recursion on $P$ to get the following CFG:
\begin{align*}
L & \rightarrow P.x \ | \ x \\
P & \rightarrow x \{P^\prime \} \\
P^\prime & \rightarrow \boldsymbol{(}n\boldsymbol{)} \ | \ .x
\end{align*}

But a direct translation of this CFG to a PEG will not work because PEG
repetition is greedy; the repetition on $P^\prime$ will consume the
last $.x$ of the l-value, and the first alternative of $L$ will always
fail. One possible solution is to not use the $P$ non-terminal in $L$,
and encode l-values directly with the following PEG (the bolded
parentheses are terminals, the non-bolded parentheses are metasymbols
of PEGs that mean grouping):
\begin{align*}
L & \leftarrow x \ S^*
\\
S & \leftarrow (\ \boldsymbol{(}n\boldsymbol{)}\ )^* .x
\end{align*}

The above uses of left recursion are common in published grammars,
with more complex versions (involving more rules and a deeper level of
indirection) appearing in the grammars in the specifications of
Java~\cite{spec:java} and Lua~\cite{spec:lua}. Having a straightforward
way of expressing these in a PEG would make the process of translating
a grammar specification from an EBNF CFG to a PEG easier and less error-prone.

In the next section, we will propose a semantic extension to the PEG
formalism that will give meaningful proof trees to left-recursive
grammars. In particular, we want to have the straightforward
translation of common left-recursive idioms such as left-associative
expressions to yield parse strings that are similar in structure to
parse trees of the original CFGs.

\section{Bounded Left Recursion}
\label{sec:left}

Intuitively, {\em bounded left recursion} is a use of a non-terminal
where we limit the number of left-recursive uses it may have. This is
the basis of our extension for supporting left recursion in PEGs. We use the notation $A^n$ to mean a
non-terminal where we can have less than $n$ left-recursive uses,
with $A^0$ being an expression that always fails. Any left-recursive
use of $A^n$ will use $A^{n-1}$, any left-recursive use of $A^{n-1}$
will use $A^{n-2}$, and so on, with $A^1$ using $A^0$ for any
left-recursive use, so left recursion will fail for $A^1$.

For the left-recursive definition $E \leftarrow E+n \ / \ n$ we have
the following progression, where we write expressions equivalent to
$E^n$ on the right side:
\begin{align*}
E^0 & \leftarrow \Nothing \\
E^1 & \leftarrow E^0 + n \ / \ n \ = \ \bot +n \ / \ n \ = \ n \\
E^2 & \leftarrow E^1 + n \ / \ n \ = \ n+n \ / \ n \\
E^3 & \leftarrow E^2 + n \ / \ n \ = \ (n+n \ / \ n)+n \ / \ n \\
& \vdots \\
E^n & \leftarrow E^{n-1} + n \ / \ n
\end{align*}

It would be natural to expect that increasing the bound will
eventually reach a fixed point with respect to a given subject, but
the behavior of the ordered choice operator breaks this expectation.
For example, with a subject {\tt n+n} and the previous PEG, $E^2$ will
match the whole subject, while $E^3$ will match just the first
{\tt n}. Table~\ref{tab:matching2} summarizes the 
results of trying to match some subjects against $E$ with different
left-recursive bounds (they show the
suffix that remains, not the matched prefix). 

\newcolumntype{R}{>{\raggedleft\arraybackslash}X}

\begin{table}[t]
\begin{tabularx}{\columnwidth}{ *{8}{R} } 
\toprule
Subject & $E^0$     & $E^1$      & $E^2$    & $E^3$    & $E^4$    & $E^5$    & $E^6$    \\ 
\midrule
{\tt  n}       & \Nothing  & $\Epsi$    & $\Epsi$  & $\Epsi$  & $\Epsi$  & $\Epsi$  & $\Epsi$    \\ 
{\tt  n+n}     & \Nothing  & {\tt +n}          & $\Epsi$  & {\tt +n}
& $\Epsi$  & {\tt +n}        & $\Epsi$     \\ 
{\tt  n+n+n}    & \Nothing  & {\tt +n+n}         & {\tt +n}        &
$\Epsi$  & {\tt +n+n}       & {\tt +n}        & $\Epsi$     \\ 
\bottomrule 
\end{tabularx}
\vspace{0.1in}
\caption{Matching $E$ with different bounds}
\label{tab:matching2}
\end{table}

The fact that increasing the bound can lead to matching a smaller
prefix means we have to pick the bound carefully if we wish to match
as much of the subject as possible. Fortunately, it is sufficient to
increase the bound until the size of the matched prefix stops
increasing.  In the above example, we would pick $1$ as the bound for
{\tt  n}, $2$ as the bound for {\tt  n+n}, and $3$ as the bound for {\tt  n+n+n}.

When the bound of a non-terminal $A$ is $1$ we are effectively
prohibiting a match via any left-recursive path, as all left-recursive
uses of $A$ will fail. $A^{n+1}$ uses $A^n$ on all its left-recursive paths, so 
if $A^n$ matches a prefix of length $k$, $A^{n+1}$ matching a prefix
of length $k$ or less means that either there is nothing to do
after matching $A^n$ (the grammar is cyclic), in which case it is
pointless to increase the bound after $A^n$, or all paths starting
with $A^n$ failed, and the match actually used a non-left-recursive
path, so $A^{n+1}$ is equivalent with $A^1$. Either option means
that $n$ is the bound that makes $A$ match the longest prefix of the subject.

We can easily see this dynamic in the $E \leftarrow E+n \ / \ n$
example. To match $E^{n+1}$ we have to match $E^n+n \ / n$. Assume
$E^n$ matches a prefix $x$ of the input. We then try to match the
rest of the input with $+n$, if this succeeds we will have matched
$x{\tt +n}$, a prefix bigger than $x$. If this fails we will have matched just {\tt  n},
which is the same prefix matched by $E^1$.

Indirect, and even mutual, left recursion is not a problem, as the
bounds are on left-recursive {\em uses} of a non-terminal, which are a property
of the proof tree, and not of the structure of the PEG. The bounds on two mutually
recursive non-terminals $A$ and $B$ will depend on which non-terminal
is being matched first, if it is $A$ then the bound of $A$ is fixed
while varying the bound of $B$, and vice-versa. A particular case of
mutual left recursion is when a non-terminal is both left and
right-recursive, such as $E \leftarrow E + E / n$. In our semantics,
$E^n$ will match $E^{n-1} + E / n$, where the right-recursive use of
$E$ will have its own bound. Later in this section we will elaborate
on the behavior of both kinds of mutual recursion.

In order to extend the semantics of PEGs with bounded left recursion,
we will show a conservative extension of the rules in
Figure~\ref{fig:matchpeg}, with new rules for left-recursive
non-terminals. For non-left-recursive non-terminals we will still use
rules {\bf var.1} and {\bf var.2}, although we will later prove that
this is unnecessary, and the new rules for non-terminals can replace
the current ones. The basic idea of the extension is
to use $A^1$ when matching a left-recursive non-terminal $A$
for the first time, and then try to increase the bound, while using
a memoization table $\mathcal{L}$ to keep the result of the current
bound. We use a different relation, with its own inference rules, for
this iterative process of increasing the bound.

\begin{figure}[t]
{\small
\begin{align*}
& \textbf{Left-Recursive Variable} \\
& 
{\frac{
\begin{array}{c}
\;\;\;\;
(A,xyz) \notin \mathcal{L} \interf
    \Matgm{P(A)}{xyz}{\mathcal{L}[(A,xyz) \mapsto \Nothing]} \Lp
    (yz,x^\prime)  \;\;\;\; \\ \;\;\;\;
				\Matgm{P(A)}{xyz}{\mathcal{L}[(A,xyz)
                                  \mapsto (yz,x^\prime)]} \Li
                                (z,(xy)^\prime) \;\;\;\;
\end{array}
}
	{\Matgm{A}{xyz}{\mathcal{L}} \Lp (z,A[(xy)^\prime])}} \mylabel{lvar.1} \\\\
& 
 \fivespaces {\frac{(A,x) \notin \mathcal{L} \;\;\;
    \Matgm{P(A)}{x}{\mathcal{L}[(A,x) \mapsto \Nothing]} \Lp \Nothing}
	{\Matgm{A}{x}{\mathcal{L}} \Lp \Nothing}} \mylabel{lvar.2} \\
      \\
&
      \fivespaces {\frac{\mathcal{L}(A,xy) = \Nothing}
	{\Matgm{A}{xy}{\mathcal{L}} \Lp \Nothing}} \mylabel{lvar.3}
      \interf
       {\frac{\mathcal{L}(A,xy) = (y,x^\prime)}
	{\Matgm{A}{xy}{\mathcal{L}} \Lp (y,A[x^\prime])}} \mylabel{lvar.4}
\\ \\
& \textbf{Increase Bound} \\
& 
{\frac{
\begin{array}{c}
\;\;\;\; \Matgm{P(A)}{xyzw}{\mathcal{L}[(A, xyzw) \mapsto (yzw,
      x^\prime)]} \Lp (zw, (xy)^\prime) \;\;\;\; \\  
\;\;\;\;	   \Matgm{P(A)}{xyzw}{\mathcal{L}[(A,xyzw) \mapsto (zw,
             (xy)^\prime)]} \Li (w, (xyz)^\prime) \;\;\;\;
\end{array}
}
	{\Matgm{P(A)}{xyzw}{\mathcal{L}[(A, xyzw) \mapsto (yzw,
            x^\prime)]} \Li (w, (xyz)^\prime)}} \mathrm{, where } \;y \neq \Epsi
	 \mylabel{inc.1} \\ \\
& 
{\frac{\Matgm{P(A)}{x}{\mathcal{L}} \Lp \Nothing}
	{\Matgm{P(A)}{x}{\mathcal{L}} \Li \mathcal{L}(A,x)}} \mylabel{inc.2}  
\;
{\frac{\Matgm{P(A)}{xyz}{\mathcal{L}[(A, xyz) \mapsto (z, (xy)^\prime)]} \Lp (yz, x^\prime)}
	{\Matgm{P(A)}{xyz}{\mathcal{L}[(A, xyz) \mapsto (z,
            (xy)^\prime)]} \Li (z, (xy)^\prime)}}
	 \mylabel{inc.3} 
\end{align*}
\caption{Semantics for PEGs with left-recursive non-terminals}
\label{fig:leftrecmemo}
}
\end{figure}

Figure~\ref{fig:leftrecmemo} presents the new rules. We give the behavior of
the memoization table $\mathcal{L}$ in the usual substitution style, where
$\mathcal{L} [(A,x) \mapsto \Anyx](B,y) = \mathcal{L}(B,y)$ if $B \neq A$ or $y \neq x$
and $\mathcal{L} [(A,x) \mapsto \Anyx](A,x) = \Anyx$ otherwise. All of
the rules in Figure~\ref{fig:matchpeg} just ignore this extra
parameter of relation $\Lp$. We also have rules for the new relation
$\Li$, responsible for the iterative process of finding the correct bound for a given
left-recursive use of a non-terminal.

Rules {\bf lvar.1} and {\bf lvar.2} apply the first time a
left-recursive non-terminal is used with a given subject, and they try
to match $A^1$ by trying to match the production of $A$ using $\Nothing$
for any left-recursive use of $A$ (those uses will fail through rule {\bf
  lvar.3}). If $A^1$ fails we do not try bigger bounds (rule {\bf
  lvar.2}), but if $A^1$ succeeds we store the result in $\mathcal{L}$ and try to
find a bigger bound (rule {\bf lvar.1}). Rule {\bf
  lvar.4} is used for left-recursive invocations of $A^n$ in the
process of matching $A^{n+1}$.

Relation $\Li$ tries to find the bound where $A$ matches the longest prefix. Which
rule applies depends on whether matching the production of $A$ using
the memoized value for the current bound leads to a longer match or
not; rule {\bf inc.1} covers the first case, where we use relation
$\Li$ again to continue increasing the bound after updating
$\mathcal{L}$. Rules {\bf inc.2} and {\bf inc.3} cover the second
case, where the current bound is the correct one and we just return
its result.

Let us walk through an example, again using $E \leftarrow E+n \ / \
n$ as our PEG, with {\tt n+n+n} as the subject. When first matching $E$ against
{\tt n+n+n} we have $(E,{\tt n+n+n}) \not\in \mathcal{L}$, as $\mathcal{L}$ is
initially empty, so we have to match $E+n \ / \ n$ against
{\tt n+n+n} with $\mathcal{L} = \{ (E,{\tt n+n+n}) \mapsto \Nothing \}$. We now
have to match $E+n$ against {\tt n+n+n}, which means matching $E$ again,
but now we use rule {\bf lvar.3}. The first alternative, $E+n$, fails,
and we have $\Matgm{E+n \ / n}{{\tt n+n+n}}{\{ (E,
  {\tt n+n+n}) \mapsto \Nothing\}} \Lp ({\tt +n+n},{\tt n})$ using the second
alternative, $n$, and rule {\bf ord.3}. 

In order to finish rule {\bf lvar.1} and the initial match we have to try
to increase the bound through relation $\Li$ with $\mathcal{L} = \{
(E,{\tt n+n+n}) \mapsto ({\tt +n+n},{\tt n}) \}$. 
This means we must try to match $E+n \ / \ n$ against {\tt n+n+n}
again, using the new $\mathcal{L}$. When we try the first alternative and match $E$ with
{\tt n+n+n} the result will be $({\tt +n+n},E[{\tt n}])$ via {\bf lvar.4}, and we can
then use {\bf con.1} to match $E+n$ yielding $({\tt +n},E[{\tt n}]{\tt
  +n})$. We have successfully increased the bound, and are in
rule {\bf inc.1}, with $x = {\tt n}$, $y = {\tt +n}$, and $zw = {\tt +n}$. 

In order to finish rule {\bf inc.1} we have to try to increase the
bound again using relation $\Li$, now with $\mathcal{L} = \{
(E,{\tt n+n+n}) \mapsto ({\tt +n},E[{\tt n}]{\tt +n}) \}$. We try to
match $P(E)$ again with this new $\mathcal{L}$, and this yields $(\varepsilon,
E[E[{\tt n}]{\tt +n}]{\tt +n})$ via {\bf lvar.4}, {\bf con.1}, and
{\bf ord.1}. We have successfully increased the bound and are using
rule {\bf inc.1} again, with $x = {\tt n+n}$, $y = {\tt +n}$, and $zw
= \varepsilon$. 

We are in rule {\bf inc.1}, and have to try to increase the bound a
third time using the relation $\Li$, with $\mathcal{L} = \{
(E,{\tt n+n+n}) \mapsto (\varepsilon,E[E[{\tt n}]{\tt +n}]{\tt +n})
\}$.  We have to match $E+n \ / n$ against {\tt n+n+n} again, using
this $\mathcal{L}$.
 In the first alternative $E$ matches and yields $(\varepsilon, E[E[E[{\tt n}]{\tt +n}]{\tt +n}])$
via {\bf lvar.4}, but the first alternative itself fails via {\bf
  con.2}. We then have to match $E+n \ / \ n$ against {\tt n+n+n} using {\bf
  ord.2}, yielding $({\tt +n+n},{\tt n})$. The attempt to increase the
bound for the third time failed (we are back to the same result we had
when $\mathcal{L} = \{ (A,{\tt n+n+n}) \mapsto \Nothing \}$), and we
use rule {\bf inc.3} once and rule {\bf inc.1} twice to propagate
$(\varepsilon,E[E[{\tt n}]{\tt +n}]{\tt +n})$ back to rule {\bf lvar.1}, and use this
rule to get the final result, $\Matgm{E}{{\tt n+n+n}}{\{\}} \Lp (\varepsilon,E[E[E[{\tt n}]{\tt +n}]{\tt +n}])$.

We can see that the parse string $E[E[E[{\tt n}]{\tt +n}]{\tt +n}]$ implies left-associativity in the $+$
operations, as intended by the use of a left-recursive rule.

More complex grammars, that encode different precedences and
associativities, behave as expected. For example, the following grammar
has a right-associative $+$ with a left-associative $-$:
\begin{align*}
E & \leftarrow M + E \ / \  M \\
M & \leftarrow M - n \ / \ n
\end{align*}

Matching $E$ with {\tt n+n+n} yields $E[M[{\tt n}]{\tt +}E[M[{\tt n}]{\tt +}E[M[{\tt n}]]]]$,
as matching $M$ against {\tt n+n+n}, {\tt n+n}, and {\tt n} all consume just the
first {\tt n} while generating $M[{\tt n}]$, because we have $\Matgm{M-n \ / \
  n}{{\tt n+n+n}}{\{(M, {\tt n+n+n}) \mapsto \Nothing\}} \Lp ({\tt +n+n},{\tt n})$ via {\bf
  lvar.3}, {\bf con.3}, and {\bf ord.3}, and we have $\Matgm{M-n \ / \
  n}{{\tt n+n+n}}{\{(M,{\tt n+n+n}) \mapsto ({\tt +n+n},{\tt n})\}} \Li ({\tt +n+n},{\tt n})$ via
{\bf inc.3}. The same holds for subjects {\tt n+n} and {\tt n} with
different suffixes.  Now, when $E$ matches {\tt n+n+n} we will have $M$ in
$M+E$ matching the first {\tt n}, with $E$ recursively matching the
second {\tt n+n}, with $M$ again matching the first {\tt n} and $E$
recursively matching the last {\tt n} via the second alternative.

Matching $E$ with {\tt n-n-n} yields $E[M[M[M[{\tt n}]{\tt -n}]{\tt -n}]]$, as $M$
now matches {\tt n-n-n} with a proof tree similar to our first example ($E
\leftarrow E+n \ / \ n$ against {\tt n+n+n}). The first alternative of $E$
fails because $M$ consumed the whole subject, and the second
alternative yields the final result via {\bf ord.3} and {\bf var.1}.

The semantics of Figure~\ref{fig:leftrecmemo} also handles indirect and
mutual left recursion well. The following mutually left-recursive PEG
is a direct translation of the CFG used as the last example of Section~2:
\begin{align*}
L & \leftarrow P.x \ / \ x \\
P & \leftarrow P\boldsymbol{(}n\boldsymbol{)} \ / \ L 
\end{align*}

It is instructive to work out what happens when matching $L$ with a
subject such as {\tt x(n)(n).x(n).x}. We will use our superscript
notation for bounded recursion, but it is easy to check that the
explanation corresponds exactly with what is happening with the
semantics using $\mathcal{L}$.

The first alternative of $L^1$ will
fail because both alternatives of $P^1$ fail, as they use $P^0$, due
to the direct left recursion on $P$, and $L^0$, due to the indirect
left recursion on $L$. The second alternative of $L^1$ matches the
first {\tt x} of the subject. Now $L^2$ will try to match $P^1$ again,
and the first alternative of $P^1$ fails because it uses $P^0$, while
the second alternative uses $L^1$ and matches the first {\tt x}, and
so $P^1$ now matches {\tt x}, and we have to try $P^2$, which will
match {\tt x(n)} through the first alternative, now using $P^1$. $P^3$
uses $P^2$ and matches {\tt x(n)(n)} with the first alternative, but
$P^4$ matches just $x$ again, so $P^3$ is the answer, and $L^2$
matches {\tt x(n)(n).x} via its first alternative.

$L^3$ will try to match $P^1$ again, but $P^1$ now matches {\tt
  x(n)(n).x} via its second alternative, as it uses $L^2$. This means
$P^2$ will match {\tt x(n)(n).x(n)}, while $P^3$ will match {\tt
  x(n)(n).x} again, so $P^2$ is the correct bound, and $L^3$ matches
{\tt x(n)(n).x(n).x}, the entire subject. It is easy to see that $L^4$
will match just $x$ again, as $P^1$ will now match the whole subject
using $L^3$, and the first alternative of $L^4$ will fail.

Intuitively, the mutual recursion is playing as nested repetitions,
with the inner repetition consuming {\tt (n)} and the outer repetition
consuming the result of the inner repetition plus {\tt .x}. The result
is a PEG equivalent to the PEG for l-values in the end of Section~2 in
the subjects it matches, but that yields parse strings that are
correctly left-associative on each {\tt (n)} and {\tt .x}.

We presented the new rules as extensions intended only for non-terminals
with left-recursive rules, but this is not necessary: the {\bf lvar}
rules can replace {\bf var} without changing the result of any proof
tree. If a non-terminal does not appear in a left-recursive position
then rules {\bf lvar.3} and {\bf lvar.4} can never apply by
definition. These rules are the only place in the semantics where the
contents of $\mathcal{L}$ affects the result, so {\bf lvar.2} is
equivalent to {\bf var.2} in the absence of left
recursion. Analogously, if $\Matgm{(P(A)}{xy}{\mathcal{L}[(A,xy) \mapsto
  \Nothing]} \Lp (y, x^\prime)$ then $\Matgm{(P(A)}{xy}{\mathcal{L}[(A,xy) \mapsto
  (y, x^\prime)]} \Lp (y, x^\prime)$ in the absence of left recursion,
so we will always have $\Matgm{A}{xy}{\mathcal{L}[(A,xy) \mapsto
  (y, x^\prime)]} \Li (y, x^\prime)$ via {\bf inc.3}, and {\bf lvar.1}
is equivalent to {\bf var.1}. We can formalize this argument with the
following lemma:
\begin{lemma}[Conservativeness]
Given a PEG $G$, a parsing expression $p$ and a subject $xy$, we have
one of the following: if $\Matg{p}{xy} \Lp \Anyx$, where $\Anyx$ is
$\Nothing$ or $(y, x^\prime)$, then $\Matgm{p}{xy}{\mathcal{L}} \Lp
\Anyx$, as long as $(A,w) \not\in \mathcal{L}$ for any non-terminal
$A$ and subject $w$ appearing as $\Matg{A}{w}$ 
in the proof tree of if $\Matg{p}{xy} \Lp \Anyx$.
\end{lemma}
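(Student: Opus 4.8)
The plan is to induct on the height of the proof tree $\mathcal{T}$ of $\Matg{p}{xy} \Lp \Anyx$ in the semantics of Figure~\ref{fig:matchpeg}. When the last rule of $\mathcal{T}$ is one of \textbf{empty.1}, \textbf{char.1}--\textbf{char.3}, \textbf{con.1}--\textbf{con.3}, \textbf{ord.1}--\textbf{ord.3}, \textbf{not.1}--\textbf{not.2}, or \textbf{rep.1}--\textbf{rep.2}, the extended system of Figure~\ref{fig:leftrecmemo} has a rule of exactly the same shape that merely threads $\mathcal{L}$ through unchanged, so the result follows by applying the induction hypothesis to each premise with the same $\mathcal{L}$; the side condition on $\mathcal{L}$ transfers because the non-terminal/subject pairs occurring in a sub-derivation form a subset of those occurring in $\mathcal{T}$. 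To make the remaining case go through I would strengthen the statement so that the induction hypothesis also records that every pair $(B,w)$ occurring as a judgment $\Matgm{B}{w}{\mathcal{L}'}$ (for some $\mathcal{L}'$) anywhere in the produced $\Lp$- or $\Li$-proof tree lies among the pairs occurring as $\Matg{B}{w}$ in $\mathcal{T}$; this extra invariant is immediate in every structural case.

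The substantive case is $p = A$, where $\mathcal{T}$ ends in \textbf{var.1} or \textbf{var.2} over a sub-derivation of $\Matg{P(A)}{xy} \Lp \Anyx'$. Since $\Matg{A}{xy}$ occurs in $\mathcal{T}$, the hypothesis gives $(A,xy) \notin \mathcal{L}$, so \textbf{lvar.3} and \textbf{lvar.4} are inapplicable and we are forced to use \textbf{lvar.2} (if $\Anyx' = \Nothing$) or \textbf{lvar.1} (if $\Anyx'$ is a pair $(y,x')$). Its first premise $(A,xy) \notin \mathcal{L}$ holds by hypothesis, and the premise $\Matgm{P(A)}{xy}{\mathcal{L}[(A,xy) \mapsto \Nothing]} \Lp \Anyx'$ follows from the induction hypothesis applied to the sub-derivation of $\Matg{P(A)}{xy} \Lp \Anyx'$ with memoization table $\mathcal{L}[(A,xy) \mapsto \Nothing]$ --- provided no pair occurring in that sub-derivation equals $(A,xy)$, for then those pairs, being in $\mathcal{T}$, lie outside $\mathcal{L}$ and hence outside $\mathcal{L}[(A,xy) \mapsto \Nothing]$.

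That $(A,xy)$ cannot occur inside the $\Matg{P(A)}{xy}$ sub-derivation is the key auxiliary lemma: a terminating derivation in the semantics of Figure~\ref{fig:matchpeg} never ``realizes'' a left-recursive cycle, i.e.\ no occurrence of a judgment has a strict descendant that is an occurrence of the same judgment. I would derive this from the standard facts that $\Lp$ is a partial function and that each derivable judgment therefore has a unique derivation (both by routine induction over the rules of Figure~\ref{fig:matchpeg}, which I would establish first): if a judgment $J$ occurred at a node and again at a strict descendant, the subtree at the descendant would be a proof of $J$ that is a \emph{proper} subtree of the proof of $J$ at the ancestor, contradicting uniqueness, since a proper subtree has strictly fewer nodes.

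Finally, the \textbf{lvar.1} case has the third premise $\Matgm{P(A)}{xy}{\mathcal{L}[(A,xy) \mapsto (y,x')]} \Li (y,x')$, stating that raising the bound once more does not improve the match. Here I would use an ``irrelevant entry'' lemma: if a $\Lp$-proof tree never applies an \textbf{lvar} rule to a judgment $\Matgm{A}{xy}{\mathcal{L}'}$, then changing the value stored at $(A,xy)$ anywhere in it leaves the tree a valid derivation. Its hypothesis is supplied by the strengthened invariant together with the no-realized-left-recursion lemma, which jointly guarantee that $(A,xy)$ does not occur in the $\Lp$-proof tree of $\Matgm{P(A)}{xy}{\mathcal{L}[(A,xy) \mapsto \Nothing]} \Lp (y,x')$ obtained above; hence $\Matgm{P(A)}{xy}{\mathcal{L}[(A,xy) \mapsto (y,x')]} \Lp (y,x')$ holds as well, and a single application of \textbf{inc.3} (with its $y$-component instantiated to $\varepsilon$) yields the required $\Li$ judgment. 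Then \textbf{lvar.1} gives $\Matgm{A}{xy}{\mathcal{L}} \Lp (y, A[x'])$, matching the conclusion of \textbf{var.1}, and one checks the strengthened invariant is preserved since the $\Li$ sub-derivation contributes only another copy of pairs already present in the $\Matg{P(A)}{xy}$ sub-derivation. I expect the main obstacle to be marshalling these auxiliary results --- uniqueness of derivations, the pair-containment invariant, and the irrelevant-entry lemma --- into exactly the right form to close the \textbf{lvar.1} case; by comparison the structural cases and the \textbf{lvar.2} case are essentially bookkeeping.
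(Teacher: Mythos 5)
Your proposal follows essentially the same route as the paper's proof: induction on the height of the proof tree for $\Matg{p}{xy} \Lp \Anyx$, trivial structural cases threading $\mathcal{L}$, and rules \textbf{lvar.1}/\textbf{lvar.2} for the variable cases with the $\Li$ premise discharged by \textbf{inc.3}; the key fact that $(A,xy)$ cannot reappear inside the subderivation of $\Matg{P(A)}{xy}$, which the paper only asserts, is precisely what you justify via determinism and uniqueness of derivations. The one divergence is minor: where you obtain $\Matgm{P(A)}{xy}{\mathcal{L}[(A,xy) \mapsto (y,x^\prime)]} \Lp (y,x^\prime)$ through a strengthened invariant plus an irrelevant-entry lemma, the paper simply applies the induction hypothesis a second time with the updated table (its side condition still holds for the same reason), which is a slightly more direct way to close the \textbf{lvar.1} case.
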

\begin{proof}
By induction on the height of the proof tree for $\Matg{p}{xy} \Lp
\Anyx$. Most cases are trivial, as the extension of their rules with
$\mathcal{L}$ does not change the table. The interesting cases are
{\bf var.1} and {\bf var.2}.

For case {\bf var.2} we need to use rule {\bf lvar.2}. We introduce
$(A, xy) \mapsto \Nothing$ in $\mathcal{L}$, but $\Matg{A}{xy}$ cannot
appear in any part of the proof tree of $\Matg{P(A)}{xy} \Lp
\Nothing$, so we can just use the induction hypothesis.

For case {\bf var.1} we need to use rule {\bf lvar.1}. Again we have $(A,
xy) \mapsto \Nothing$ in $\mathcal{L}$, but we can use the induction
hypothesis on $\Matgm{P(A)}{xy}{\mathcal{L}[(A,xy) \mapsto \Nothing]}$
to get $(y, x^\prime)$. We also use {\bf inc.3} to get $\Matgm{P(A)}{xy}{\mathcal{L}[(A,xy) \mapsto (y,
  x^\prime) \Li (y, x^\prime)]}$ from $\Matgm{P(A)}{xy}{\mathcal{L}[(A,xy) \mapsto (y,
  x^\prime)]}$, using the induction hypothesis, finishing {\bf lvar.1}.
\end{proof}

In order to prove that our semantics for PEGs with left-recursion
gives meaning to any closed PEG (that is, any PEG $G$ where $P(A)$ is
defined for all non-terminals in $G$) we have to fix the case where a
repetition may not terminate ($p$ in $p^*$ has not failed but not
consumed any input). We can add a $x \neq \Epsi$ predicate to rule {\bf
  rep.2} and then add a new rule:
\begin{align*}
& {\frac{\Matgm{p}{x}{\mathcal{L}} \Lp \Tup{x}{\Epsi}}
	{\Matgm{p^*}{x}{\mathcal{L}} \Lp \Tup{x}{\Epsi}}} \mylabel{rep.3} 
\end{align*}

We also need a well-founded ordering $<$ among the elements of the left side
of relation $\Lp$. For the subject we can use $x < y$ if and only if
$x$ is a proper suffix of $y$ as the order, for the parsing expression
we can use $p_1 < p_2$ if and only if $p_1$ is a proper part of the
structure of $p_2$, and for $\mathcal{L}$ we can use $\mathcal{L}[A
\mapsto (x,y)] < \mathcal{L}$ if and only if either $\mathcal{L}(A)$ is not
defined or $x < z$, where $\mathcal{L}(A) = (z, w)$. Now we can prove
the following lemma:
\begin{lemma}[Completeness]
Given a closed PEG $G$, a parsing expression $p$, a subject $xy$, and a
memoization table $\mathcal{L}$, we have either
$\Matgm{p}{xy}{\mathcal{L}} \Lp (y,x^\prime)$ or
$\Matgm{p}{xy}{\mathcal{L}} \Lp \Nothing$.
\end{lemma}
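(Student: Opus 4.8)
The plan is to prove this by well-founded induction on the triple $(xy, p, \mathcal{L})$ using the lexicographic combination of the three orderings just defined: the subject ordered by proper-suffix, the parsing expression ordered by the subexpression relation, and $\mathcal{L}$ ordered by the "strictly longer match stored" relation. The key observation to set up is that every premise appearing in any inference rule of Figures~\ref{fig:matchpeg} and~\ref{fig:leftrecmemo} has a left-hand side that is strictly smaller than the conclusion's in this ordering, so that the induction hypothesis applies to each premise. For most rules this is immediate: in {\bf con.1}, {\bf con.2}, {\bf ord.*}, {\bf not.*} the expression shrinks while the subject does not grow; in {\bf char.*}, {\bf empty.1} there are no premises; in {\bf rep.2} (with the new $x\neq\Epsi$ side condition) the subject strictly shrinks; and {\bf rep.3} has a strictly smaller expression in its premise. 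The genuinely delicate cases are the left-recursive ones, {\bf lvar.1}/{\bf lvar.2} and the {\bf inc.*} rules, where the expression $P(A)$ is not a subexpression of $A$ and the subject need not shrink, so the whole weight of the argument falls on the $\mathcal{L}$ component.

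First I would handle the $\Li$-relation as an auxiliary induction. I would prove: for a closed $G$, any $A$, any subject $xy$ and any $\mathcal{L}$ with $\mathcal{L}(A,xy) = (y, x^\prime)$ defined, the relation $\Matgm{P(A)}{xy}{\mathcal{L}}\Li X$ holds for some $X$ (either $\Nothing$ — impossible here since the table has a hit, but the shape of {\bf inc.2} forces a case split — or a pair $(z, w)$ with $z$ a suffix of $y$). The case analysis is driven by the outcome of $\Matgm{P(A)}{xy}{\mathcal{L}}\Lp -$, which exists by the outer induction hypothesis because $P(A)$ with the \emph{same} subject and \emph{same} $\mathcal{L}$ is not smaller — so I actually need to thread this more carefully: the $\Li$ step in {\bf inc.1} is invoked with a table $\mathcal{L}[(A,xyzw)\mapsto(zw,(xy)^\prime)]$ that is strictly larger than $\mathcal{L}[(A,xyzw)\mapsto(yzw,x^\prime)]$ precisely because $zw$ is a proper suffix of $yzw$ (this is where the side condition $y\neq\Epsi$ in {\bf inc.1} is essential). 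So the $\Li$ recursion terminates because the stored match strictly lengthens each time, and it is bounded by the length of the full subject; when it can no longer lengthen we land in {\bf inc.2} or {\bf inc.3}.

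Then, for {\bf lvar.1}/{\bf lvar.2}: given $A$ and $xy$ with $(A,xy)\notin\mathcal{L}$, I form $\mathcal{L}' = \mathcal{L}[(A,xy)\mapsto\Nothing]$. By the outer induction hypothesis applied to $P(A)$ on subject $xy$ with table $\mathcal{L}'$ — which is legitimate since $\mathcal{L}' < \mathcal{L}$ is \emph{false} in general, so here I must instead argue that the relevant measure that decreases is not $\mathcal{L}$ but the pair $(xy,\mathcal{L})$ restricted to the key $(A,xy)$: once $(A,xy)$ is in the table, every recursive use of $A$ on subject $xy$ goes through {\bf lvar.3} or {\bf lvar.4}, which have no premises at all, so the sub-proof of $\Matgm{P(A)}{xy}{\mathcal{L}'}\Lp -$ cannot re-enter {\bf lvar.1} on the same $(A,xy)$. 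This is the crux: I need a lemma stating that in the proof tree rooted at $\Matgm{P(A)}{xy}{\mathcal{L}'}\Lp -$, the expression/subject/table measure is well-founded \emph{given} that the key $(A,xy)$ is now occupied, which reduces the problematic case to ordinary structural recursion on $P(A)$ with finitely many fresh non-terminals, each occupying its own key. Formally this means the induction should be on the number of \emph{unoccupied} keys $(B,w)$ with $w$ a suffix of the current subject together with the structural/suffix ordering — a lexicographic measure where entering {\bf lvar.1} decreases the count of free keys.

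The main obstacle I anticipate is exactly this bookkeeping: the naive three-part ordering stated in the excerpt does \emph{not} by itself make $\mathcal{L}'<\mathcal{L}$ when we pass from {\bf lvar.1}'s conclusion to its first premise (we add a key mapped to $\Nothing$, which the stated ordering does not cover). So the real work is to replace or augment that ordering with a measure counting "number of $(B,w)$ pairs, $w$ a suffix of the subject, not yet in the domain of $\mathcal{L}$", show it is a natural number, show it strictly decreases at {\bf lvar.1}/{\bf lvar.2}, and show every other rule either strictly decreases the subject or the expression (leaving the count non-increasing) — and then verify the $\Li$ sub-induction separately via the strictly-lengthening stored match. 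Once the measure is pinned down, each rule-by-rule verification is routine, and the disjunction "$(y,x^\prime)$ or $\Nothing$" follows because every rule's conclusion has one of those two forms and the induction hypotheses supply a result of one of those forms for each premise.
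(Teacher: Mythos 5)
Your proposal is correct in outline and follows essentially the same route as the paper, whose entire proof is a two-sentence appeal to well-founded induction on the triple $(\mathcal{L},xy,p)$ with the ordering stated just before the lemma. What you add --- the explicit free-key count to justify the {\bf lvar.1}/{\bf lvar.2} step (where the table gains a $\Nothing$ entry, a case the paper's ordering only covers implicitly and whose well-foundedness relies on the finiteness of non-terminals and suffixes) and the inner induction on the strictly lengthening stored match for the {\bf inc} rules --- is exactly the detail the paper dismisses as ``straightforward,'' not a different argument.
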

\begin{proof}
By induction on the triple $(\mathcal{L},xy,p)$. It is straightforward
to check that we can always use the induction hypothesis on the
antecedent of the rules of our semantics.
\end{proof}

We can write expression grammars with several operators with different precedences
and associativities with the use of multiple non-terminals for the
different precedence levels, and the use of left and right recursion
for left and right associativity. A simple addition to the semantics,
though, lets us write these grammars in a style that is more familiar
to users of LR parsing tools. The next section presents this addition.

\section{Controlling Operator Precedence}
\label{sec:prec}

A non-obvious consequence of our bounded left recursion semantics is
that a rule that mixes left and right recursion is
right-associative. For example, matching $E \leftarrow E+E \ /\ n$
against {\tt n+n+n} yields the parse string $E[E[n]+E[E[n]+E[n]]]$,
where the {\tt +} operator is right-associative. The
reason is that $E^2$ already matches the whole string:
\begin{align*}
E^1 & \leftarrow E^0 + E \ / \ n = n \\
E^2 & \leftarrow E^1 + E \ / \ n = n + E \ / \ n
\end{align*}

We have the first alternative of $E^2$ matching {\tt n+} and then
trying to match $E$ with {\tt n+n}. Again we will have $E^2$ matching
the whole string, with the first alternative matching {\tt n+} and
then matching $E$ with {\tt n} via $E^1$. 

The algorithm proposed by Warth et al.
to handle left-recursion in PEGs~\cite{warth:left} also has a right-recursive
bias in rules that have both left and right-recursion, 
as noted by Tratt~\cite{tratt:left}. In this section, we give an extension
to our semantics that lets the user control the associativity instead of
defaulting to right-associativity in grammars like the above one.

Grammars that mix left and right recursion are also problematic
when parsing CFGs, as mixing left and right recursion leads to
ambiguous grammars. The CFG $E \rightarrow E+E \ |\ n$ is ambiguous,
and will cause a shift-reduce conflict in an LR parser.
Resolving the conflict by shifting (the default in LR parser
generators) also leads to right-associative operators.

Mixing operators, such as in
the grammar $E \rightarrow E+E \ |\ E*E \ |\ n$, will also cause
LR conflicts. Writing these grammars is convenient when the number of
operators is large, though, and an LR parser can control the
associativity and precedence among different operators by picking
which action to perform in the case of those shift-reduce
conflicts. LR parser generators turn a specification of the
precedence and associativity of each operator into a recipe for
resolving conflicts in a way that will yield the correct parses for
operator grammars that mix left and right recursion.

It is possible to adapt our left-recursive PEG semantics to also let
the user control precedence and associativity in PEGs such as $E
\leftarrow E+E \ /\ n$ and $E \leftarrow E+E \ /\ E*E \ /\ n$. The
idea is to attach a {\em precedence level} to each occurrence of a
left-recursive non-terminal, even if it is not in a left-recursive
position. The level of a non-terminal is stored in the memoization table $\mathcal{L}$,
and a left-recursive use of the non-terminal fails if its level is less
than the level stored in the table. These precedence levels match the way
left and right precedence numbers are assigned to operators in {\em precedence
climbing} parsers~\cite{hanson:prec}.

Figure~\ref{fig:leftrecprec} gives the semantics for left-recursive
non-terminals with precedence levels.  The default level is $1$.
The changes are minimal: rule {\bf lvar.1} now stores the attached
level in the memoization table before trying to increase the bound,
while {\bf lvar.4} guards against the attached level being lesser than
the stored level. A new rule {\bf lvar.5} fails the match of $A_k$ if
the attached level $k$ is less than the level stored in the memoization
table. Rules {\bf lvar.2}, {\bf lvar.3}, and {\bf inc.2} are unchanged,
and rules {\bf inc.1} and {\bf inc.3} just propagate the store level
without changes. These five rules have been elided.

Going back to our previous example, the grammar $E \leftarrow E+E \ /\ n$,
we can get a left-associative $+$
by giving the left-recursive $E$ a precedence level that is greater
than the precedence level of the right-recursive $E$. So we should
rewrite the previous grammar as $E \leftarrow E_1+E_2 \ /\ n$. 
As the precedence level of the right-recursive
$E$ is $2$, after a right-recursive call the left-recursive call $E_1$,
with precedence level $1$, always fails when trying to match. This
means that $E_2$ can match just a prefix {\tt n} of the subject,
via the second alternative of the ordered choice.
 
In a similar way, we could rewrite $E \leftarrow E+E \ /\ n$ 
as $E \leftarrow E_2+E_1 \ /\ n$, or just use the default level and 
interpret it as $E \leftarrow E_1+E_1 \ /\ n$,
to get a right-associative $+$ in our new semantics that has
precedence levels. Now after a right-recursive call the left-recursive
call can match an unlimited prefix of the subject. 

We could impose left-associativity as the default in case of a grammar that
mixes left and right recursion just by changing $k \ge k^\prime$ in 
rule {\bf lvar.4} to $k > k^\prime$ and by changing $k < k^\prime$ in rule 
{\bf lvar.5} to $k \le k^\prime$. The disadvantage is that this makes specifying
precedence levels with right-associative operators a little harder, but this is
an user interface issue: a tool could use an YACC-style interface, with
{\tt \%left} and {\tt \%right} precedence directives and the order these directives appear
giving the relative precedence, with the tool assigning precedence levels automatically.
We left right-associativity as the default for the semantics presented in this section
because it makes the multi-operator and multi-associativity grammar example we give next
easier to understand.

\begin{figure}[t]
{\small
\begin{align*}
& 
{\frac{
\begin{array}{c}
\;\;\;\;
(A,xyz) \notin \mathcal{L} \interf
    \Matgm{P(A)}{xyz}{\mathcal{L}[(A,xyz) \mapsto \Nothing]} \Lp
    (yz,x^\prime)  \;\;\;\; \\ \;\;\;\;
				\Matgm{P(A)}{xyz}{\mathcal{L}[(A,xyz)
                                  \mapsto (yz,x^\prime,k)]} \Li
                                (z,(xy)^\prime) \;\;\;\;
\end{array}
}
	{\Matgm{A_k}{xyz}{\mathcal{L}} \Lp (z,A[(xy)^\prime])}} \mylabel{lvar.1} \\\\
&
      \fivespaces {\frac{\mathcal{L}(A,xy) = (y,x^\prime,k^\prime)\ \
          \ k \ge k^\prime}
	{\Matgm{A_k}{xy}{\mathcal{L}} \Lp (y,A[x^\prime])}} \mylabel{lvar.4}
      \interf
       {\frac{\mathcal{L}(A,xy) = (y,x^\prime,k^\prime)\ \ \ k < k^\prime}
	{\Matgm{A_k}{xy}{\mathcal{L}} \Lp \Nothing}} \mylabel{lvar.5}
\end{align*}
\caption{Semantics for left-recursive non-terminals with precedence levels}
\label{fig:leftrecprec}
}
\end{figure}

In a grammar with several operators grouped in different precedence
classes, the precedence classes would map to different levels in the
non-terminals. A grammar with left-associative $+$ and $-$ operators
in the lowest precedence class, left-associative $*$ and $\div$
operators in the middle precedence class, a right-associative $**$
operator in second-to-highest precedence class, an unary $-$ in the
highest precedence class, and parentheses, could be described by the
following PEG:
\begin{align*}
E & \leftarrow E_1 + E_2 \ / \ E_1 - E_2 \ /\ E_2 * E_3 \ /\ E_2 \div
E_3 \ /\ E_3 *\!*\; E_3 \ / \ -E_4 \ / \ (E_1) \ / \ n 
\end{align*}

As an example of how precedence levels work in practice, let us
consider the PEG $E \leftarrow E_1+E_2 \ /\ E_2*E_2 \ /\ n$, where we
set precedence levels so a right-associative $*$ has precedence over a
left-associative $+$, and check what happens on the inputs {\tt
  n+n+n}, {\tt n*n*n}, {\tt n*n+n}, and {\tt n+n*n}. We will assume
that we are trying to match $E_1$ against each subject.

For {\tt n+n+n} we first try to match $E_1+E_2 \ /\ E_2*E_2 \ /\ n$
with $\mathcal{L}(E,\mathtt{n+n+n}) = \Nothing$. The first two
alternatives will fail via {\bf lvar.3}, and the third matches the
first {\tt n}. We then try to increase the bound with
$\mathcal{L}(E,\mathtt{n+n+n}) = (\mathtt{+n+n},n,1)$. In the first
alternative, $E_1$ matches via {\bf lvar.4}, and we start again with
$E_2$ by trying to match $E_1+E_2 \ /\ E_2*E_2 \ /\ n$ against {\tt
  n+n} with $\mathcal{L}(E,\mathtt{n+n}) = \Nothing$. 

Again, the third
alternative matches the first {\tt n}, and we will try to increase the
bound with $\mathcal{L}(E,\mathtt{n+n}) = (\mathtt{+n},n,2)$. Notice
the precedence level is $2$. This level makes the first alternative of $E_1+E_2 \ /\ E_2*E_2 \
/\ n$ fail against {\tt n+n}, and the second alternative fails on
$*$. The third alternative succeeds, but does not increase the bound,
so matching $E_2$ against {\tt n+n} will match just the first {\tt n},
and $E_1+E_2$ will match the first {\tt n+n} of the input {\tt
  n+n+n}. 

We then try to increase the bound of $E_1$ against {\tt n+n+n} again,
 with $\mathcal{L}(E,\mathtt{n+n+n}) = (\mathtt{+n},E[n]+E[n],1)$. The
 first alternative of $E$ will now match the whole subject, yielding
 the parse string $E[E[n]+E[n]]+E[n]$, and a third try to increase the bound
 will fail, and the final result is $E[E[E[n]+E[n]]+E[n]]$, with a
 left-associative $+$.

For the subject {\tt n*n*n}, after matching the first $n$, we try to
increase the bound with $\mathcal{L}(E,\mathtt{n*n*n}) =
(\mathtt{*n*n},n,1)$. The precedence level is still $1$ because we are
matching $E_1$ against the whole subject. The first alternative fails
on $+$, the first $E_2$ of the second alternative succeeds via {\bf
  lvar.4}, and we will try the second $E_2$ of the second alternative
against {\tt n*n}. The first {\tt n} matches with
$\mathcal{L}(E,\mathtt{n*n}) = \Nothing$, and then we try to increase
the bound with $\mathcal{L}(E,\mathtt{n*n}) = (\mathtt{*n},n,2)$. Now
the second alternative succeeds again with {\bf lvar.4}, as the
precedence level is $2$, and eventually the final result is
$E[E[n]*E[E[n]*E[n]]]$, with a right-associative $*$.

The behavior on the other two subjects, {\tt n*n+n} and {\tt n+n*n},
are respectively analogous to the first two, with the right-recursive
$E_2$ on $E_2*E_2$ failing to match {\tt n+n} on the first subject
because the level is $2$, and the right-recursive $E_2$ on $E_1+E_2$
matching {\tt n*n} on the second subject due to the same reason,
yielding the results $E[E[E[n]*E[n]]+E[n]]$ and
$E[E[n]+E[E[n]*E[n]]]$. Both results have the correct precedence.

While it is natural to view a PEG as a blueprint for a recursive
descent parser with local backtracking, with optional use of
memoization for better time complexity, an alternative execution
model compiles PEGs to instructions on a lower-level {\em parsing
  machine}~\cite{dls:lpeg}. The next section briefly describes the
formal model of one such machine, and extends it to have the same
behavior on left-recursive PEGs as the semantics of
Figures~\ref{fig:leftrecmemo} and~\ref{fig:leftrecprec}.

\section{A Parsing Machine for Left-recursive PEGs}
\label{sec:machine}

The core of LPEG, an implementation of PEGs for the Lua
language~\cite{roberto:lpeg}, is a virtual parsing machine. LPEG
compiles each parsing expression to a program of the parsing machine,
and the program for a compound expression is a combination of the
programs for its subexpressions. The parsing machine has a formal
model of its operation, and proofs that compiling a PEG yields a
program that is equivalent to the original PEG~\cite{dls:lpeg}.

The parsing machine has a register to hold the program counter used to
address the next instruction to execute, a register to hold the
subject, and a stack that the machine uses for pushing call frames and
backtrack frames. A call frame is just a return address for the
program counter, and a backtrack frame is an address for the program
counter and a subject. The machine's instructions manipulate the
program counter, subject, and stack.

The compilation of the following PEG (for a sequence of zero or more
$a$s followed by any character other than $a$ followed by a $b$) uses
all of the basic instructions of the parsing machine:
\begin{eqnarray*}
A & \rightarrow & !a \, . \, B \ / \ aA \\
B & \rightarrow & b
\end{eqnarray*}

This PEG compiles to the following program:
\begin{eqnarray*}
& &    \In{Call}{A} \\
A\!\!: & & \In{Choice}{A1} \\
& &    \In{Choice}{A2} \\
& &    \In{Char}{a} \\
& &    \In{Commit}{A3} \\
A3\!\!: & & \In{Fail}{} \\
A2\!\!: & & \In{Any}{} \\
& &    \In{Call}{B} \\
& &    \In{Commit}{A4} \\
A1\!\!: & & \In{Char}{a} \\
& &    \In{Jump}{A} \\
A4\!\!: & & \In{Return}{} \\
B\!\!: & & \In{Char}{b} \\
& &    \In{Return}{}
\end{eqnarray*}

The behavior of each instruction is straightforward: \texttt{Call}
pushes a call frame with the address of the next instruction and jumps
to a label, \texttt{Return} pops a call frame and jumps to its
address, \texttt{Jump} is an unconditional jump, \texttt{Char} tries
to match a character with the start of the subject, consuming the
first character of the subject if successful, \texttt{Any} consumes
the first character of the subject (failing if the subject is
$\Epsi$), \texttt{Choice} pushes a backtrack frame with the subject
and the address of the label, \texttt{Commit} discards the backtrack
frame in the top of the stack and jumps, and \texttt{Fail} forces a
failure. When the machine enters a failure state it pops call frames
from the stack until reaching a backtrack frame, then pops this frame
and resumes execution with the subject and address stored in it.

Formally, the program counter register, the subject register, and
the stack form a machine state. We represent it as a tuple
$\mathbb{N} \times \mathrm{T}^* \times \mathrm{Stack}$, 
in the order above. A machine state can also be a
failure state, represented by $\Sfail{e}$, where $e$ is the
stack. Stacks are lists of $(\mathbb{N} \times \mathrm{T}^*)
\,\cup\, \mathbb{N}$, where
$\mathbb{N} \times \mathrm{T}^*$ represents a 
backtrack frame and $\mathbb{N}$ represents a call frame.

Figure~\ref{fig:semantics} presents the operational semantics of the 
parsing machine as a relation between machine states. The program
$\mathcal{P}$ that the machine executes is implicit. The relation
$\xrightarrow{\mathrm{Instruction}}$ relates two states when \emph{pc}
in the first state addresses a instruction matching the label, and the
guard (if present) is valid.

\begin{figure}[t]
{\small
\begin{eqnarray*}
\Sstep{\Sstate{pc}{a\!:\!s}{e}}{\Ia{Char}{a}}%
      {\Sstate{pc+1}{s}{e}}{}
\Sstep{\Sstate{pc}{b\!:\!s}{e}}{\Ia{Char}{a}}%
      {\Sfail{e}}{, a \neq b}
\Sstep{\Sstate{pc}{\Epsi}{e}}{\Ia{Char}{a}}%
      {\Sfail{e}}{}
\Sstep{\Sstate{pc}{a\!:\!s}{e}}{\Ia{Any}{}}%
      {\Sstate{pc+1}{s}{e}}{}
\Sstep{\Sstate{pc}{\Epsi}{e}}{\Ia{Any}{}}%
      {\Sfail{e}}{}
\Sstep{\Sstate{pc}{s}{e}}{\Ia{Choice}{i}}%
      {\Sstate{pc+1}{s}{(pc+i,s)\!:\!e}}{}
\Sstep{\Sstate{pc}{s}{e}}{\Ia{Jump}{l}}%
      {\Sstate{pc+l}{s}{e}}{}
\Sstep{\Sstate{pc}{s}{e}}{\Ia{Call}{l}}%
      {\Sstate{pc+l}{s}{(pc+1)\!:\!e}}{}
\Sstep{\Sstate{pc_1}{s}{pc_2\!:\!e}}{\Ia{Return}{}}%
      {\Sstate{pc_2}{s}{e}}{}
\Sstep{\Sstate{pc}{s}{h\!:\!e}}{\Ia{Commit}{l}}%
      {\Sstate{pc+l}{s}{e}}{}
\Sstep{\Sstate{pc}{s}{e}}{\Ia{Fail}{}}%
      {\Sfail{e}}{}
\Ssteppp{\Sfail{pc\!:\!e}}%
      {\Sfail{e}}{}
\Sstepppf{\Sfail{(pc,s)\!:\!e}}%
      {\Sstate{pc}{s}{e}}{}
\end{eqnarray*}
\caption{Operational semantics of the parsing machine}
\label{fig:semantics}
}
\end{figure}

We can extend the parsing machine to handle left recursion with
precedence levels by changing the semantics of the {\tt Call} and {\tt
  Return} instructions, as well as the semantics of failure
states. The new {\tt Call} instruction also needs a second parameter, the
precedence level. We also need to add a new kind of frame to the
machine stack: a {\em left-recursive call} frame. 

A left-recursive call frame to a non-terminal $A$ is a tuple
$(pc_r,pc_A,s,X,k)$ where $pc_r$ is the address of the instruction
that follows the call instruction, $pc_A$ is the address of the first
instruction of the non-terminal $A$, $s$ is the subject at the call site,
$X$ is either $\Nothing$ or a suffix of $s$, and is the memoized
result, and $k$ is the precedence level. These frames encode both the
control information for the machine, and the memoization table for
left recursion.

The idea of the extension is for the {\tt Call} instruction to handle
the {\bf lvar.3}, {\bf lvar.4}, and {\bf lvar.5} cases, as well as
parts of {\bf lvar.1} and {\bf lvar.2}. {\tt Return} then handles the
rest of {\bf lvar.1}, as well as {\bf inc.1} and {\bf inc.3}, while
the semantics of the failure states handle {\bf inc.2} and the rest of
{\bf lvar.2}.

Figure~\ref{fig:machineleft} gives the operational semantics of the
extension, and replace the rules of Figure~\ref{fig:semantics} that
deal with {\tt Call}, {\tt Return}, and failure states. As a
convenience, we use the notation $(pc,s) \notin \mathcal{L}_e$ to mean that there is
no left-recursive call frame in $e$ with $pc$ in the second position and $s$
in the third position, the notation $\mathcal{L}_e(pc,s) = \Nothing$ to
mean that in the left-recursive call frame of $e$ with $pc$ in the second
position and $s$ in the third position, the value in the fourth
position is $\Nothing$, and the notation $\mathcal{L}_e(pc,s) =
(s^\prime,k)$ to mean that in the left-recursive call frame of $e$
with $pc$ in the second position and $s$ in the third position, the
values of the fourth and fifth positions are $s^\prime$ and $k$, respectively.

\begin{figure}[t]
{\small
\begin{eqnarray*}
\Sstep{\Sstate{pc}{s}{e}}{\Iaa{Call}{l}{k}}%
      {\Sstate{pc+l}{s}{(pc+1,pc+l,s,\Nothing,k)\!:\!e}}{,\ \mathrm{where}\ (pc+l,s) \notin \mathcal{L}_e}
\Sstep{\Sstate{pc}{s}{e}}{\Iaa{Call}{l}{k}}%
       {\Sfail{e}}{,\ \mathrm{where}\ \mathcal{L}_e(pc+l,s) = \Nothing}
\Sstep{\Sstate{pc}{s}{e}}{\Iaa{Call}{l}{k}}%
      {\Sstate{pc+1}{s^\prime}{e}}{,\ \mathrm{where}\ \mathcal{L}_e(pc+l,s) = (s^\prime,
         k^\prime)\ \mathrm{and}\ k \ge k^\prime}
\Sstep{\Sstate{pc}{s}{e}}{\Iaa{Call}{l}{k}}%
       {\Sfail{e}}{,\ \mathrm{where}\ \mathcal{L}_e(pc+l,s) = (s^\prime,
         k^\prime)\ \mathrm{and}\ k < k^\prime}
\Sstep{\Sstate{pc}{s^{\prime\prime}}{(pc_r,pc_A,s,s^\prime,k)\!:\!e}}{\Ia{Return}{}}%
      {\Sstate{pc_A}{s}{(pc_r,pc_A,s,s^{\prime\prime},k):e}}{,\
        \mathrm{where}\ |s^{\prime\prime}| < |s^\prime|\ \mathrm{or}\ s^\prime =
      \Nothing}
\Sstep{\Sstate{pc}{s^{\prime\prime}}{(pc_r,pc_A,s,s^\prime,k)\!:\!e}}{\Ia{Return}{}}%
      {\Sstate{pc_r}{s^\prime}{e}}{,\
        \mathrm{where}\ |s^{\prime\prime}| >= |s^\prime|}
\Ssteppp{\Sfail{(pc_r,pc_A,s,s^\prime,k)\!:\!e}}%
      {\Sstate{pc_r}{s^\prime}{e}}{}
\Ssteppp{\Sfail{pc\!:\!e}}%
      {\Sfail{e}}{}
\Sstepppf{\Sfail{(pc,s)\!:\!e}}%
      {\Sstate{pc}{s}{e}}{}
\end{eqnarray*}
\caption{Semantics of left recursion in the parsing machine}
\label{fig:machineleft}
}
\end{figure}

The formal model of the parsing machine
represents the compilation process using a transformation function
$\mathrm{\Pi}$. The term $\mathrm{\Pi}(G, i, p)$ is the translation of
pattern $p$ in the context of the grammar $G$, where $i$ is the
position where the program starts relative to the start of the
compiled grammar. We use the notation $|\mathrm{\Pi}(G, i, p)|$
to mean the number of instructions in the program $\mathrm{\Pi}(G, i,
p)$.

The following lemma gives the correctness condition for the
transformation $\Pi$ for non-left-recursive PEGs in the original
semantics of PEGs and the parsing machine, based on an extension of the relation
$\xrightarrow{\mathrm{Instruction}}$ for a concatenation of multiple instructions:
\begin{lemma}[Correctness of $\Pi$ without left recursion]
Given a PEG $G$, a parsing expression $p$, and a subject $xy$, 
if $\Matg{p}{xy} \Lp (y,x^\prime)$ then $\Sstate{pc}{xy}{e}
\xrightarrow{\Pi(G,i,p)} \Sstate{pc+|\Pi(G,i,p)|}{y}{e}$, and if $\Matg{p}{xy} \Lp \Nothing$
then $\Sstate{pc}{xy}{e} \xrightarrow{\Pi(G,i,p)} \Sfail{e}$, where
$pc$ is the address of the first instruction of $\Pi(G,i,p)$.
\end{lemma}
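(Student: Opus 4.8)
The plan is to proceed by induction on the height of the proof tree for the PEG judgment $\Matg{p}{xy} \Lp \Anyx$ (where $\Anyx$ is either $(y,x^\prime)$ or $\Nothing$), following the inductive structure of the parsing expression $p$ and the compilation function $\Pi$. The key invariant to maintain is that running $\Pi(G,i,p)$ from a state whose $pc$ addresses the first instruction of this block, on subject $xy$ with an arbitrary stack $e$, ends in $\Sstate{pc+|\Pi(G,i,p)|}{y}{e}$ on success and in $\Sfail{e}$ on failure --- crucially leaving the stack $e$ untouched in both cases. First I would set up the base cases: for $\Epsi$ the program is empty (or a no-op) and the result is immediate; for a terminal $a$ the program is a single $\In{Char}{a}$, and the three machine rules for $\In{Char}{a}$ correspond exactly to rules \textbf{char.1}, \textbf{char.2}, \textbf{char.3}.

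Then I would handle each compound case by unfolding the definition of $\Pi$ on that constructor and applying the induction hypothesis to the subexpressions. For concatenation $\Con{p_1}{p_2}$, the code is the concatenation $\Pi(G,i,p_1)$ followed by $\Pi(G,i',p_2)$; I chain the IH for $p_1$ (reaching the start of the $p_2$ block on the residual subject) with the IH for $p_2$, and in the failure subcases note that a failure inside $p_1$ propagates directly to $\Sfail{e}$, matching \textbf{con.2}/\textbf{con.3}. For ordered choice $\Choice{p_1}{p_2}$, the code wraps $\Pi(G,\cdot,p_1)$ between a $\In{Choice}{}$ that pushes a backtrack frame $(l,xy)$ and a $\In{Commit}{}$ that discards it; here I use the IH for $p_1$ together with the backtrack-frame rule of Figure~\ref{fig:semantics} --- on success of $p_1$ the $\In{Commit}{}$ pops the frame (rule \textbf{ord.1}), on failure of $p_1$ the machine pops call frames until it hits the backtrack frame, restoring subject $xy$, and then the IH for $p_2$ applies (rules \textbf{ord.2}/\textbf{ord.3}). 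The not-predicate $!p$ and repetition $p^*$ are handled similarly using the standard LPEG compilation idioms (a $\In{Choice}{}$/$\In{Commit}{}$/$\In{Fail}{}$ sandwich for $!p$, and a back-edge $\In{Jump}{}$ for $p^*$), with the IH applied to $p$; for $p^*$ the induction is on the subject length (each successful iteration strictly shortens it, since the PEG is assumed well-formed here so $p$ in $p^*$ consumes input), so the recursion terminates. The non-terminal case $A$ compiles to $\In{Call}{l_A}$ where $l_A$ points at $\Pi(G,\cdot,P(A))$; I apply the IH to $P(A)$ and use the $\In{Call}{}$/$\In{Return}{}$ pair, observing that $\In{Call}{}$ pushes a return address which $\In{Return}{}$ pops exactly when $P(A)$'s block finishes, matching rules \textbf{var.1}/\textbf{var.2}.

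The main obstacle I anticipate is bookkeeping the program addresses: getting the relative offsets in $\Pi(G,i,p)$ to line up so that, e.g., the target of a $\In{Commit}{}$ or $\In{Jump}{}$ is exactly $pc+|\Pi(G,i,p)|$, and that the IH's hypothesis ``$pc$ is the address of the first instruction of $\Pi(G,i,p)$'' is reestablished for each subexpression with the correct shifted index $i$. This is the usual tedious-but-routine part of such machine-correctness proofs, and it leans entirely on the concrete definition of $\Pi$ (which the paper has specified elsewhere); the semantic content is carried by the induction hypothesis and the stack-preservation invariant. A secondary subtlety is making sure the stack $e$ really is returned unchanged in every failure branch --- this matters because the enclosing choice/predicate contexts rely on it --- which follows because every frame this block pushes (backtrack, call) is popped or discarded before the block's execution ends, whether it succeeds or fails.
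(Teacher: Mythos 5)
For this lemma the paper gives no argument of its own: its ``proof'' is a one-line citation to the parsing-machine paper~\cite{dls:lpeg}, and your sketch is essentially a reconstruction of that cited proof --- induction on the height of the derivation of $\Matg{p}{xy} \Lp \Anyx$, case analysis following $\Pi$, and the invariant that each compiled block returns (or fails) with the surrounding stack $e$ intact. Your plan is sound as stated; the only superfluous step is appealing to well-formedness in the $p^*$ case, since the lemma already assumes a finite PEG derivation exists and its premises for {\bf rep.2} are strictly smaller subtrees, so the induction hypothesis applies directly.
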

\begin{proof}
Given in the paper that describes the parsing machine~\cite{dls:lpeg}.
\end{proof}

Before proving a revised correctness lemma for the compilation of
left-recursive PEGs, we need to relate memoization tables
$\mathcal{L}$ with stacks that have left-recursive call frames. We say
that a stack $e$ is {\em consistent} with a memoization table
$\mathcal{L}$ if and only if, for every entry in $\mathcal{L}$ with a
non-terminal $A$, a subject $s$, a result $X$ (either a suffix of $s$
or $\Nothing$), and a precedence level $k$, there is a left-recursive
call frame $(pc_r,pc_A,s,X,k)$ in $e$ where $pc_A$ is the position
where the program for $P(A)$ starts, and $pc_R$ is any position.

Intuitively, an stack $e$ is consistent with a memoization table
$\mathcal{L}$ if each entry in $\mathcal{L}$ has a corresponding
left-recursive call frame in $e$, and vice-versa.

The lemma below revises the correctness condition for the
transformation $\Pi$ to account for both the left-recursive PEG
semantics and the left-recursive parsing machine semantics:
\begin{lemma}[Correctness of $\Pi$ with left recursion]
Given a PEG $G$, a parsing expression $p$, a non-terminal $A$ of $G$, a subject $xyz$, a
memoization table $\mathcal{L}$, and a stack $e$ consistent with $\mathcal{L}$,
if $\Matgm{p}{xyz}{\mathcal{L}} \Lp (z,w)$ then $\Sstate{pc}{xyz}{e}
\xrightarrow{\Pi(G,i,p)} \Sstate{pc+|\Pi(G,i,p)|}{z}{e}$, if $\Matgm{p}{xyz}{\mathcal{L}} \Lp \Nothing$
then $\Sstate{pc}{xyz}{e} \xrightarrow{\Pi(G,i,p)} \Sfail{e}$, and if
$\Matgm{P(A)}{xyz}{\mathcal{L}\ \cup\ \{(A,xyz) \mapsto (yz,v)\}} \Li
(z,w)$ then $\Sstate{pc}{xyz}{(pc_r,pc_A,xyz,yz,k):e}
\xrightarrow{\Pi(G,i,P(A))\ \mathrm{Return}} \Sstate{pc_r}{y}{e}$.
\end{lemma}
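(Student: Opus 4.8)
The plan is to prove all three implications simultaneously by induction on the height of the derivation tree of the PEG (respectively $\Li$) judgment, exactly mirroring the structure of the proof of the Correctness Lemma without left recursion. The first two implications (for $\Lp$ succeeding and $\Lp$ failing) extend that earlier lemma; the only genuinely new work is in the cases whose derivation uses one of the new rules from Figure~\ref{fig:leftrecmemo}, namely \textbf{lvar.1}--\textbf{lvar.5}, \textbf{inc.1}--\textbf{inc.3}, and the third implication (about $\Li$), which is needed precisely to handle \textbf{lvar.1}. Throughout, the consistency hypothesis between the stack $e$ and the table $\mathcal{L}$ is what lets us translate ``$(A,s)\notin\mathcal{L}$'' and ``$\mathcal{L}(A,s)=X$'' queries into the corresponding side conditions $(pc_A,s)\notin\mathcal{L}_e$ and $\mathcal{L}_e(pc_A,s)=X$ on the machine side.

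First I would dispatch the cases that use the old rules: these go through verbatim from the Correctness Lemma without left recursion, since none of the old instructions or rules inspect $\mathcal{L}$ or the left-recursive frames, and the inductive hypothesis supplies what we need (consistency is preserved because these steps don't touch the relevant frames). Next, the cheap new cases: \textbf{lvar.3}/\textbf{lvar.5} (the memoized-failure and too-low-precedence cases) map directly to the second and fourth \texttt{Call} rules of Figure~\ref{fig:machineleft}, producing $\Sfail{e}$; \textbf{lvar.4} maps to the third \texttt{Call} rule, which pops nothing and jumps to $pc+1$ with the memoized suffix $s'$, after which the compiled code for the continuation (the ``$A[x']$'' wrapping is semantic only, the machine just leaves suffix $s'$) runs by IH. For the failure-propagation rules of the machine I would check that \textbf{inc.2} and the ``$A^1$ fails'' half of \textbf{lvar.2} correspond to the new failure rule that pops a left-recursive frame and resumes at $pc_r$ with $s'$ (which for \textbf{inc.2} is the previously memoized value, and for the \textbf{lvar.2}-after-\textbf{lvar.1}-setup is $\Nothing$, triggering ordinary failure propagation up the stack).

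The heart of the argument is \textbf{lvar.1} together with the $\Li$ implication. For \textbf{lvar.1}: the machine executes \texttt{Call}~$l$~$k$ against subject $xyz$ with $(pc_A,xyz)\notin\mathcal{L}_e$ (first \texttt{Call} rule), pushing the left-recursive frame $(pc_r,pc_A,xyz,\Nothing,k)$. Now the first premise $\Matgm{P(A)}{xyz}{\mathcal{L}[(A,xyz)\mapsto\Nothing]}\Lp(yz,x')$ is handled by IH against the extended stack, which is consistent with the extended table (here we use that the pushed frame has $\Nothing$ in the fourth slot). That run lands at the \texttt{Return} for $P(A)$ in state $\Sstate{\cdot}{yz}{(pc_r,pc_A,xyz,\Nothing,k):e}$; since $|yz|<|\Nothing|$ is read as ``$s'=\Nothing$'', the first \texttt{Return} rule fires, overwriting the frame's fourth slot with $yz$ and jumping back to $pc_A$. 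From here the state is exactly $\Sstate{pc_A}{xyz}{(pc_r,pc_A,xyz,yz,k):e}$, which is the antecedent of the third implication applied to $\Matgm{P(A)}{xyz}{\mathcal{L}[(A,xyz)\mapsto(yz,x',k)]}\Li(z,(xy)')$ (the second premise of \textbf{lvar.1}); invoking that part of the induction gives us a run $\xrightarrow{\Pi(G,i,P(A))\ \mathrm{Return}}\Sstate{pc_r}{z}{e}$, which together with the $A[\cdot]$-wrapping being machine-invisible is precisely the desired machine behavior for \textbf{lvar.1}. The $\Li$ implication is proved in the same induction: \textbf{inc.3} is the second \texttt{Return} rule ($|s''|\ge|s'|$, pop and resume at $pc_r$ with $s'$), \textbf{inc.2} is the failure-then-pop route, and \textbf{inc.1} re-enters the $\Pi(G,i,P(A))\ \mathrm{Return}$ block with a larger memoized suffix in the frame, matching the first \texttt{Return} rule (the side condition $|s''|<|s'|$ on the machine corresponds to the ``$y\neq\Epsi$'' strict-increase condition in \textbf{inc.1}) and then appealing to IH on the smaller $\Li$ derivation.

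The main obstacle I anticipate is bookkeeping the invariant that the stack stays consistent with $\mathcal{L}$ across the \texttt{Return}-loop: each \textbf{inc.1} step mutates one frame's fourth component (the memoized suffix) while \textbf{lvar.4} reads it, and one must verify that the machine's in-place frame update $(pc_r,pc_A,s,s',k)\mapsto(pc_r,pc_A,s,s'',k)$ tracks the semantics' table update $\mathcal{L}[(A,s)\mapsto(s',k)]\mapsto\mathcal{L}[(A,s)\mapsto(s'',k)]$ — in particular that no \emph{other} entry of $\mathcal{L}$ is disturbed, so frames for other active left-recursive non-terminals (indirect/mutual recursion) remain correctly aligned. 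A second subtlety is justifying that the induction measure actually decreases: in \textbf{lvar.1} we pass from a $\Lp$-derivation to a $\Li$-derivation of its premise, and in \textbf{inc.1} from a $\Li$-derivation to a strictly smaller $\Li$-derivation (of its second premise), so ordering the three implications by derivation height — with the convention that the $\Li$ case of a given height may invoke the $\Lp$ cases of strictly smaller subderivations and the $\Li$ case of strictly smaller subderivations — gives a well-founded scheme. Once that measure is pinned down and consistency is shown to be preserved by each machine rule, the remaining verifications are the kind of routine instruction-by-instruction matching already carried out in the non-left-recursive correctness proof.
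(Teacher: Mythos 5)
Your proposal is correct and follows essentially the same route as the paper's proof: a simultaneous induction on the heights of the derivations of the three antecedents, with the unchanged cases inherited from the non-left-recursive correctness lemma, \textbf{lvar.3}/\textbf{lvar.4}/\textbf{lvar.5} discharged via consistency and the \texttt{Call} rules, \textbf{lvar.2}/\textbf{inc.2} via failure propagation through the left-recursive frame, and \textbf{lvar.1}/\textbf{inc.1}/\textbf{inc.3} via the two \texttt{Return} rules together with the induction hypotheses on the $\Lp$ and $\Li$ subderivations. Your added attention to preservation of stack--table consistency and to the well-foundedness of the measure only makes explicit what the paper leaves implicit.
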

\begin{proof}
By induction on the heights of the proof trees for the
antecedents. Most cases are similar to the ones in the proof of the
previous lemma~\cite{dls:lpeg}, as they involve parts of the semantics
of PEGs and the parsing machine that are unchanged. The interesting
cases occur when $p$ is a non-terminal $A$ (in the case of the
relation $\Lp$) or the right side of the production of a non-terminal
$A$ (in the case of relation $\Li$).

The subcases where the proof tree of $A$ ends with {\bf lvar.3}, {\bf
  lvar.4}, or {\bf lvar.5} follow from the consistency constraint on
the stack $e$ and the corresponding rule for the {\tt Call}
instruction. Subcase {\bf lvar.2} follows from an application of the
induction hypothesis, and the rule for failure states with a
left-recursive call frame on the top of the stack.

Subcase {\bf lvar.1} follows from the induction hypothesis on the
subtree ending with relation $\Lp$, the first transition rule for {\tt
  Return}, and the induction hypothesis on the subtree ending with
relation $\Li$.

Subcase {\bf inc.1} is similar to {\bf lvar.1}, while subcase {\bf
  inc.2} is similar to {\bf lvar.3}. Subcase {\bf inc.3} follows from
the induction hypothesis and the second transition rule for {\tt Return}.
\end{proof}

The use of left recursion is common in context free grammar definitions, and
several attempts have been made to give meaning to left-recursive PEGs
that would make it easier to reuse left-recursive CFGs. The next
section reviews previous work on left recursion for PEGs, and
related work on left recursion for top-down parsing approaches.

\section{Related Work}
\label{sec:related}

Warth et al.~\cite{warth:left} describes a modification of the packrat
parsing algorithm to support both direct and indirect left
recursion. The algorithm uses the packrat memoization table to detect
left recursion, and then begins an iterative process that is similar
to the process of finding the correct bound in our semantics.

Warth et al.'s algorithm is tightly coupled to the packrat parsing
approach, and its full version, with support for indirect left
recursion, is complex, as noted by the
authors~\cite{ometa:complexity1}. The released versions of the authors'
PEG parsing library, OMeta~\cite{warth:ometa}, only implement support
for direct left recursion to avoid the extra complexity~\cite{ometa:complexity1}.
Our approach does not rely on memoization; although the left recursion table $\mathcal{L}$
looks like a memoization table, the recursive structure of the natural semantics
makes it work more like a stack. The translation of our semantics to a parsing
machine in Section~5 makes this point clear.

Warth et al.'s algorithm also produces surprising results with some grammars,
both directly and indirectly left-recursive, that are related to its reuse of the
memoization table, as reported by some users
of Warth et al.'s algorithm~\cite{peglist:problemleft1, peglist:problemleft2, ometalist:problemleft1}.
As an example, consider the following grammar, based on the report of an user~\cite{peglist:problemleft1}:
\begin{align*}
S & \leftarrow X  \\
X & \leftarrow X \; Y \ / \ \Epsi \\
Y & \leftarrow x
\end{align*}

The algorithm fails to parse strings of the form $x^+$, because the
nullable left recursion of $X$ will make the algorithm mistakenly fail the
matching of  the non-terminal $Y$ that follows $X$ (if the first alternative
was $X \; x$ instead of $X \; Y$ the grammar would work).

Our semantics does not share these
issues, although it shows that a left-recursive packrat parser cannot
index the packrat memoization table just by a parsing expression and a
subject, as the $\mathcal{L}$ table is also involved. One solution to
this issue is to have a scoped packrat memoization table, with a new
entry to $\mathcal{L}$ introducing a new scope. We believe this
solution is simpler to implement in a packrat parser than fixing and implementing
Warth et al.'s close to fifty lines of pseudo-code (see the discussion about IronMeta below).

Tratt~\cite{tratt:left} presents an algorithm for supporting direct
left recursion in PEGs, based on Warth et al.'s, that does not use a
packrat memoization table and does not assume a packrat parser. 
Tratt's algorithm fixes the issues that Warth et al.'s algorithm has with nullable left
recursion, but only supports direct left recursion, while our solution works
with any kind of left recursion.

Tratt also presents a more complex algorithm that tries to ``fix'' the right-recursive bias in
productions that have both left and right recursion, like the $E
\leftarrow E+E \ /\ n$ example we discussed at the end of
Section~3. Although we do not believe this bias is a problem, our semantics
can give a more general solution to operator precedence and associativity in PEGs, 
as we have shown in Section~\ref{sec:prec}.

IronMeta~\cite{ironmeta} is a PEG library for the Microsoft Common Language Runtime,
based on OMeta~\cite{warth:ometa}, that supports direct and indirect
left recursion using an implementation of an unpublished preliminary
version of our semantics. This preliminary version is essentially the same, apart
from notational details and the presence of precedence levels, so IronMeta can be considered a working
implementation of the semantics of Figure~\ref{fig:leftrecmemo}. Initial versions of IronMeta used
Warth et al.'s algorithm for left recursion~\cite{warth:left}, but in
version 2.0 the author switched to an implementation of our semantics,
which he considered ``much simpler and more general''~\cite{ironmeta}.

Parser combinators~\cite{hutton:comb} are a top-down parsing method
that is similar to PEGs, being another way to declaratively specify a
recursive descent parser for a language, and share with PEGs the same issues of
non-termination in the presence of left recursion. Frost et
al.~\cite{frost:left} describes an approach for supporting left
recursion in parser combinators where a count of the number of
left-recursive uses of a non-terminal is kept, and the non-terminal
fails if the count exceeds the number of tokens of the input. We have
shown in Section~3 that such an approach would not work with PEGs,
because of the semantics of ordered choice (parser combinators use
the same non-deterministic choice operator as
CFGs). Ridge~\cite{ridge:allcfg} presents another way of implementing
the same approach for handling left recursion, and has the same issues
regarding its application to PEGs.

ANTLR~\cite{parr:antlrpldi} is a popular
parser generator that produces top-down parsers for Context-Free
Grammars based on LL(*), an extension of LL(k) parsing. Version 4 of
ANTLR will have support for direct left recursion that is specialized
for expression parsers~\cite{antlr:left}, handling precedence and associativity by
rewriting the grammar to encode a {\em precedence climbing}
parser~\cite{hanson:prec}. This support is heavily dependent on ANTLR
extensions such as semantic predicates and backtracking, and not translatable to PEGs.

Generalized LL parsing~\cite{scott:glltree,scott:gll} solves the problem of
left recursion in LL(1) recursive descent parsers, along with the problem of
LL(1) conflicts, by making the parse proceed among the different conflicting
alternatives ``in parallel'', replacing the recursive descent stack with a
{\em graph structured stack} (GSS), a data structure adapted from Generalized LR
parsers~\cite{tomita:gss}. The resulting parsing technique can parse the language of
any context-free language. As this includes ambiguous grammars, the tree construction
part of the parser is non-trivial to implement efficiently, using {\em shared packed parse forest}
data structure also adapted from generalized bottom-up techniques~\cite{scott:brnglr}.
 The behavior of the algorithm is tied to the non-deterministic
semantics of CFG alternatives, and it relies on consuming the whole input to
detect a successful parse, making the technique not applicable as a solution for left recursion
in PEGs, which have deterministic, ordered choice, and can successfully consume just
a prefix of the input.

\section{Conclusion}
\label{sec:con}

We presented a conservative extension to the semantics of PEGs that
gives an useful meaning for PEGs with left-recursive rules. It is
the first extension that is not based on packrat parsing as the
parsing approach, while supporting both direct and indirect left
recursion. The extension is based on bounded left recursion, where we
limit the number of left-recursive uses a non-terminal may have,
guaranteeing termination, and we use an iterative process to find the
smallest bound that gives the longest match for a particular use of
the non-terminal.

We also presented some examples that show how grammar writers can use
our extension to express in PEGs common left-recursive idioms from
Context-Free Grammars, such as using left recursion for
left-associative repetition in expression grammars, and the use of
mutual left recursion for nested left-associative repetition. We
augmented the semantics with {\em parse strings} to show how we get a
similar structure with left-recursive PEGs that we get with the parse
trees of left-recursive CFGs.

We have proved the conservativeness of our extension,
and also proved that all PEGs are complete with the extension, so
termination is guaranteed for the parsing of any subject with any PEG,
removing the need for any static checks of well-formedness beyond the
simple check that every non-terminal in the grammar has a rule.

We have also described a simple addition to our semantics that makes
it easier to describe expression grammars with multiple levels of
operator precedence and associativity, a pattern that users of LR
parser generators are used to.

Finally, we presented a semantics for describing left-recursive PEGs
as programs in a low-level {\em parsing machine}, as an extension to
the semantics of a parsing machine for non-left-recursive
PEGs~\cite{dls:lpeg}. We prove that this extension is correct with
regards to our semantics for left-recursive PEGs.

Our semantics has already been implemented in a PEG library that uses
packrat parsing~\cite{ironmeta}. We also have a prototype of the
left-recursive parsing machine semantics built on top of
LPEG~\cite{roberto:lpeg}, an implementation of the parsing machine for
regular PEGs that we extend. Preliminary benchmarks show that
left-recursive grammars perform similarly to the same grammars with
left recursion removed.

\bibliographystyle{model3a-num-names}
\bibliography{leftpeg}

\end{document}